\newcommand{\E}{\mathbb {E}}
\newcommand{\R}{\mathbb {R}}
\newcommand{\Prob}{\mathbb {P}}
\newtheorem{theorem}{Theorem}
\newtheorem{lema}{Lemma}[section]
\newcommand{\Var}{\operatorname{Var}}
\newcommand{\Cov}{\operatorname{Cov}}
\newcommand{\cov}{\operatorname{cov}}
\title{Clustering inference in multiple groups}
\author[1]{Debora Zava Bello}
\author[1,*]{ Marcio Valk }
\author[1]{Gabriela Bettella Cybis}
\affil[1]{Department of Statistics, Federal University of Rio Grande do Sul}
\affil[*]{Corresponding author: Marcio Valk, marcio.valk@ufrgs.br}
\date{Jun 2021}
\begin{document}

\maketitle

\begin{abstract} 
 Inference in clustering is paramount to uncovering inherent group structure in data. Clustering methods which assess statistical significance have recently drawn attention owing to their importance for the identification of patterns in high dimensional data with applications in many scientific fields. We present here a U-statistics based approach,  specially tailored for high-dimensional data, that clusters the data into \emph{three} groups while assessing the significance of such partitions.  Because our approach stands on the U-statistics based clustering framework of the methods in R package $uclust$, it inherits its characteristics being a non-parametric method relying on very few assumptions about the data, and thus can be applied to a wide range of dataset. 
  Furthermore our method aims to be a more powerful tool to find the best partitions of the data into three groups when that particular structure is present. In order to do so, we first propose an extension of the test U-statistic and develop its asymptotic theory. Additionally we propose a ternary non-nested significance clustering method. Our approach is tested through multiple simulations and found to have more statistical power than competing alternatives in all scenarios considered. Applications to peripheral blood mononuclear cells and to image recognition shows the versatility of our proposal, presenting a superior performance when compared with other approaches. 
  
\end{abstract}

\section{Introduction}


In clusters analysis the aim is to divide data into groups of similar items and  there are different ways to accomplish this task. A large number of algorithms based on different measures have been proposed and each different measure may lead to potentially different results (\cite{euan2019}). Clusters can be inherently present in the data like in phylogenetic analysis (\cite{rosenberg2002,chen2015}) or they can be built when clustering should take place regardless of whether innate cluster structure is present as in customer segmentation (\cite{motlagh2019, hennig2015}). In order to evaluate clustering methods, it is necessary to consider the context, the objectives of clustering and to have a suitable measure of dissimilarity (\cite{von2012}). A critical issue is how to discover inherent cluster structure in data, in other words, whether the clusters represent in fact an important feature or are simply the result of sample variation. This becomes even more challenging when considering the context of high dimensional data. We present here a U-statistics based approach that clusters the data in three groups while assessing the significance of such partitions. Our method is specially tailored for high-dimensional data and adaptable to different distance measures.

In a typical application of inference in clustering when the groups are already defined and there is no need for an algorithm or method to find them, the null hypothesis is that all groups are random samples from the same population (overall sample homogeneity). In the multivariate analysis of variance (M)ANOVA procedure,  when  presented in terms of a linear model, the homogeneity of groups stands for equality of means between all groups. Assumptions of independence and normality of the data, homoscedasticity of variance and homogeneity in group are required for exact (finite sample) inference. In addition, a large sample size, depending on the dimension of the data is generally necessary. For the context where there is no information about the existence of groups and the objective is to know if they exist and what they are, some approaches have been proposed for addressing the problem of assessing significance of partitions, or determining which clustering layers represent actual population structure and which are simple consequence of spurious random effects. To avoid resorting to heuristic criteria or the researcher's judgement  to define which partition levels should be assigned meaning these approaches proposes to assess statistical significance. However the success of these methods depends on the underlying cluster structure (\cite{adolfsson2019}).

Several approaches have been proposed to assess statistical significance in clustering, for example the procedure presented in \cite{mclachlan2004} which considers mixture models of distributions such as the Gaussian. A maximum likelihood approach is used by \cite{demidenko2018} to test no-clusters hypothesis. However,  when the data are high dimensional and have small sample sizes the problem becomes increasingly challenging, since it involves complete parametric estimation, usually requiring costly matrix inversions. The works of \cite{mcshane2002, helgeson2020} address this issue by using reduction of dimensionality of the data matrix and sparse covariance estimation. An approach inspired on the bootstrap strategy is proposed by \cite{shimodaira2004} which is implemented in the R package $pvclust$   (\cite{suzuki2006})  and used in phylogenetics to assess confidence in hierarchical clustering. \cite{Liu2008} proposes a statistical test to assess the significance of clustering the data into $K$ groups, specifically tailored to the high dimension low sample size (HDLSS) scenario,  that has been implemented in the R package $sigclust$. However, the implementation and applications consider only two groups. Additionally, \cite{kimes17} extend the method to assess significance in hierarchical clustering. However, this approach requires that the data comes from a single multivariate normal distribution, which can be an issue since rejection of the no cluster hypothesis may be a simple consequence of non-normal data.

Our work focuses specifically on the HDLSS setting and extends the works of \cite{cybis18,valk20} making it possible to simultaneously test the homogeneity of \emph{three} groups, one of which may have size \emph{one}. The test statistic to compare \emph{three} groups, where one of them may be an outlier, is a extension of the test statistic $B_n$ proposed by \cite{pinheiro09}. Here the hypotheses are similar to those of (M)ANOVA where the null is that the elements in the \emph{three} groups come from the same distribution (homogeneity, no-clusters) versus the alternative hypothesis that the data distribution (not necessarily normal) of at least one of the groups is different from the others. Asymptotic normality of the extended $B_n$  is obtained using U-statistics theory. An estimator for the variance of the extended $B_n$ is proposed. In addition, we have developed an algorithm ($uclust3$) that finds the best significant separation in \emph{three} groups. Simulation studies show that our proposal presents coherent results, such as control of Type I Error and the increased  Power to identify clusters as they become more separated. Furthermore, our comparative simulation study with other methods shows that in the case where there are exactly \emph{three} groups, the approach we are proposing has greater power, that is, greater ability to correctly identify \emph{three} clusters. More accurate results of $uclust3$ are found in an application to real image recognition data, corroborating the better performance of our approach observed in the simulations. Although we are using Euclidean distance and simulating data with normal distribution, these aspects are not essential to the validity of the method properties.

The steps to developing our three groups clustering method are outlined as follows. First, in Section \ref{sec::review} we review the U-statistics based theory of the  homogeneity test of \cite{cybis18} and present the U-statistics theory for \emph{three} groups. In Section \ref{subec:extBn} we present the extension of the $B_n$ statistics proposed by \cite{pinheiro09} to contemplate \emph{three} groups in which one may have  size one, in order to devise a clustering algorithm that can properly identify outlier elements. Additionally an investigation of theoretical properties that show its compatibility with the previous framework and asymptotic theory, is also presented. In Section \ref{subsec:varBn} we explore the variance aspects of the extended $B_n$ and propose an approach to estimate this variance. In Section \ref{sec:uclust3} we propose the $uclust3$ method which finds the statistically significant data partition that better separates the sample into three groups. The remainder of the paper focuses on evaluating the methodology through simulation studies, in Section \ref{sec::simulation}, and applications to real data in Section \ref{sec::application}. Finally, in Section \ref{sec:conclusions} we discuss the overall results.

\section{Methods}\label{sec:methods}

\subsection{U-Statistics based test for three group separation }\label{sec::review}

Let $\bf{X}=({\bf X}_{1},\dots,{\bf X}_{n})$ be a random sample of $n$
$L$-dimensional vectors  divided in three groups $G_1$, $G_2$ and $G_3$ of sample sizes $n_1$, $n_2$ and $n_3$, respectively, where $n=n_1+n_2+n_3$. In the $g$-th group, for $g \in \{1,2,3\}$, observations ${\bf X}^{(g)}_{1},\dots,{\bf X}^{(g)}_{n_g}$ are assumed to be independent and identically distributed with a $L$-variate distribution $F_g$.   Here, the distribution $F_g$  admits finite mean vector $\boldsymbol{\mu}_g$  and positive definite dispersion matrix $\boldsymbol{\Sigma}_g$ (not necessarily multi-normal). 
Following the approach of \cite{Sen2006} and \cite{pinheiro09}, we define the functional distance $\theta(F_g,F_{g'})$ as

\begin{equation}\label{eq:phi_kernel}
\theta(F_g,F_{g'})=\int\int\phi(x_1,x_2)dF_g(x_1)dF_{g'}(x_2), \quad  x_1,x_2\in \mathbb{R}^L,
\end{equation}
where $g,g'\in\{1,2,3\}$ and $\phi(\cdot,\cdot)$ is a symmetric kernel of order $2$. If we assume that $\theta(\cdot,\cdot)$ is a convex linear function of its marginal components, then we have
\begin{equation}\label{eq:uneq}
\theta(F_g, F_{g'}) \geq \frac12 \,\{\theta(F_g, F_g) +\theta(F_{g'}, F_{g'})\},
\end{equation}
for all distributions $F_g$ and $F_{g'}$, with equality holding whenever $\mu_g= \mu_{g'}$.

Note that the functional $\theta(\cdot,\cdot)$ can be used to define both distance within and between groups. It follows from U-statistics theory that an unbiased estimator of this functional for within group distance $\theta(F_g,F_g)$ is a generalized U-statistic \cite{Hoeffding1948}, with kernel $\phi( \cdot, \cdot)$, defined as 

\begin{equation}\label{eq:Unwithin}
 U_{n_g}^{(g)}=\dbinom{n_g}{2}^{-1}\sum_{1\leq i< j\leq n_g}\phi({\bf
X}^{(g)}_{i},{\bf X}^{(g)}_{j}),
\end{equation}
where $g\in\{1,2,3\}$. Analogously, the unbiased estimator for the between group functional distance $\theta(F_g,F_{g'})$ is  defined by  
\vspace{-2mm}
\begin{equation}\label{eq:Unbetween}
 U_{n_g,n_{g'}}^{(g,g')}=\frac{1}{n_g n_{g'}}\sum_{i=1}^{n_g}\sum_{j=1}^{n_{g'}}\phi({\bf
X}^{(g)}_{i},{\bf X}^{(g')}_{j}),
\end{equation}
where $g, g' \in\{1,2,3\}$ and $g \neq g'$.

The combined sample U-statistic is usually decomposed as 
\begin{eqnarray}\label{eq:statisticsUn}
U_{n} &=& \sum_{g=1}^{3} \frac{n_{g}}{n} U_{n_g}^{(g)}+\sum_{1 \leq g<g^{\prime} \leq 3} \frac{n_{g} n_{g^{\prime}}}{n(n-1)}\left\{2 U_{n_g, n_{g^{\prime}}}^{(g,g')}-U_{n_g}^{(g)}-U_{n_{g^{\prime}}}^{(g')}\right\}\nonumber\\
&=& W_n+B_n.
\end{eqnarray}
Decomposition \eqref{eq:statisticsUn} leads to the statistic $B_n$, which provides the focal point of our methodology,   
\begin{eqnarray}\label{eq:statisticsBn}
 B_{n}&=&
\sum_{1 \leq g<g^{\prime} \leq 3} \frac{n_{g} n_{g^{\prime}}}{n(n-1)}\left\{2 U_{n_g,n_{g^{\prime}}}^{(g,g')}-U_{n_g}^{(g)}-U_{n_{g^{\prime}}}^{(g')}\right\}.
\end{eqnarray}

Here $U_{n_g}^{(g)}$ for $g\in \{1,2,3\}$ are U-statistics associated to within group distances, as defined in \eqref{eq:Unwithin}, and $U_{n_g n_{g'}}^{(g,g')}$, $g\neq g'\in\{1,2,3\}$, are the U-statistics associated to between group distances as defined in \eqref{eq:Unbetween}. Note that the definition of $U_{n_g}^{(g)}$  require a minimum of 2 elements in the group. This imposes minimum group sizes $n_g\geq 2$, for $g\in \{1,2,3\}$ for proper definition of $B_n$.

The methodology proposed in \cite{cybis18} and \cite{valk20} considers a group homogeneity test which verifies whether two groups in fact constitute separated groups, or if they stem from the same distribution. In this work, for data arranged in three groups $G_1$, $G_2$ and $G_3$, the interest is in verifying whether the data are homogeneous or if there is at least one group statistically separated. Thus, the null hypothesis $H_0$ states that $F_1=F_2=F_3$, while the alternative $H_1$ states that there are $i\neq j$, $\in \{1,2,3\}$ where $F_i\neq F_j$. In cases where groups $G_1$, $G_2$ and $G_3$ have more than two elements, the asymptotic properties of $B_n$ are addressed in \cite{pinheiro09}.  The statistics $B_n$ is in the class of degenerate U-statistics for which asymptotic normality prevails and the convergence rates are $L$ and/or $\sqrt{n}$. Additionally, under the null, we have $\E(B_n)=0$ and under the alternative, $\E(B_n) > 0$. The null hypothesis is rejected for large values of standardized $B_n$, where the variance of $B_n$, under $H_0$, is obtained by a resampling procedure \cite{Sen2006}.

\subsection{The extension of test U-statistics for tree groups }\label{subec:extBn}

The homogeneity test proposed in \cite{cybis18} presents an essential concept for  our clustering algorithm. However, the group size restriction required by the definition of the U-statistic $B_n$ in \eqref{eq:statisticsBn} constrains this method to cases where all subgroups have sizes $n_i\geq 2$, $i=1,2,3$, and  consequently clustering methods  will fail in cases where the data has an outlier. In order to build a clustering algorithm that admits groups of size 1 we propose an extension of $B_n$. We can assume, without loss of generality, that only the group $G_1$ may have one element, and define

\begin{eqnarray} \label{eq:Bn3}
B_n=\left\{
\begin{array}{ll}
\frac{2n_2}{n(n-1)} \left( U_{1,n_2}^{(1,2)}-U_{n_2}^{(2)} \right) + \frac{2n_3}{n(n-1)} \left( U^{(1,3)}_{1,n_3} - U_{n_3}^{(3)} \right)   \\
\\
+ \frac{n_2n_3}{n(n-1)}\left(2U_{n_2,n_3}^{(2,3)}-U_{n_2}^{(2)}-U_{n_3}^{(3)}\right), 
 \hspace{5mm} \hbox{ if } n_1=1,\hbox{ and } n_2,n_3 >1 \\
\\
\displaystyle \sum_{1 \leq i < j \leq 3} \frac{n_i n_j}{n(n-1)} \left( 2U_{n_i,n_j}^{(i,j)} - U_{n_i}^{(i)}- U_{n_j}^{(j)} \right), \hspace{5mm} \hbox{if } n_1,n_2,n_3 >1. \\
\end{array}
\right.
\end{eqnarray}
where $U_{n_g,n_{g'}}^{(g,g')}$ and $U_{n_g}^{(g)}$ are defined, respectively, in \eqref{eq:Unbetween} and \eqref{eq:Unwithin}. 

This is a natural extension of $B_n$ considering data separation in three groups, when allowing for clusters of size 1. This extension coincides with that of expression \eqref{eq:statisticsBn} for group of sizes $n_1,n_2,n_3>1$, and thus all properties mentioned above are still valid for the new definition in that case.   We ascertain the validity of these asymptotic properties or analogous alternatives in the case of $n_1=1$.

Note that, when $G_1$ has size one, we can rewrite $B_n$ as

\begin{eqnarray*}
B_n &=& \frac{2n_2}{n(n-1)}U^{(1,2)}_{1,n_2}+\frac{2n_3}{n(n-1)}U^{(1,3)}_{1,n_3}+\frac{2n_2n_3}{n(n-1)}U^{(2,3)}_{n_2,n_3}\\
& & -\frac{n_2(2+n_3)}{n(n-1)}U^{(2)}_{n_2}-\frac{n_3(2+n_2)}{n(n-1)}U^{(3)}_{n_3} 
\end{eqnarray*}
where $U_{1,g}^{(1,g)}$ and $U_{n_g}^{(g)}$,  $g=2,3$ are as defined in \eqref{eq:Unbetween} and \eqref{eq:Unwithin}. If we consider the extension of $B_n$ in \eqref{eq:Bn3}, then we can write the combined sample U-statistics as

\begin{eqnarray*}
U_n &=& B_n+W_n^*.
\end{eqnarray*}
where $W_n^*$ is  an appropriate modification the term $W_n$.  Thus, $B_n$ still arises from the decomposition of the combined sample U-statistics into $B_n$ and a modified term  $W_n$. This extended definition allows us to build a U-test when a  group has size 1. We conveniently labeled the data in order to arrange the groups as follows. Let $G_1=\{{\bf X}_1\}$, $G_2=\{{\bf X}_{2},\dots,{\bf X}_{n_2+1}\}$ and $G_3=\{{\bf X}_{n_2+2},\dots,{\bf X}_{n}\}$, $n=1+n_2+n_3$. We still have $\E[B_n]=0$, under the null hypothesis of overall group homogeneity.  Additionally, if we make the assumption that  
\begin{eqnarray}\label{eq:assump1}
 \theta_{gg'}>\theta_g,
\end{eqnarray}
for $g\neq g' \, \in \{1,2,3\}$ where  $\theta_g= \operatorname{E}\left[ \phi ( X_g, X_g)\right]$ and $\theta_{gg'}=\operatorname{E}\left[ \phi ( X_g, X_{g'})\right]$, 
 then under alternative we have that $\E[B_n]>0$. Note that this assumption is usual and when \eqref{eq:assump1} is valid then equation (\ref{eq:uneq}) is always satisfied.

Asymptotic theory for the $B_n$ statistic for group sizes greater than 2 is developed in the work of \cite{pinheiro09}, where it is established that $B_n$ is a degenerate U-statistic and asymptotic normality is provided. The following  theorems demonstrate that the extended $B_n$ is a non degenerated U-statistics and establish the asymptotic distribution of the extended $B_n$ under $H_0$ for increasing dimension $L$ and sample size $n$, requiring regularity conditions akin to those of the $n_1,n_2,n_3>1$ case. The following Lemma is an important result required to demonstrate the asymptotic convergence of the test statistic.
\begin{lema}\label{lema:converg}
Let $\frac{X}{\delta _n} \xrightarrow{D} \operatorname{N}(0,1)$, $\delta _n=\operatorname{O}(1)$ and $\delta ^*_n= \operatorname{O}(1)$. Then, $\frac{X}{\delta ^*_n}\xrightarrow{D} \operatorname{N}(0,M)$ where $M=\lim_{n\rightarrow\infty} \left( \frac{\delta _n^2}{\delta _n^{*2}} \right)$.
\end{lema}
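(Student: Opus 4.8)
The plan is to reduce the statement to a direct application of Slutsky's theorem after rewriting the renormalized variable as a product of the ratio that is already under control and a purely deterministic factor. First I would write
\begin{equation*}
\frac{X}{\delta_n^*} \;=\; \frac{X}{\delta_n}\cdot\frac{\delta_n}{\delta_n^*},
\end{equation*}
which is legitimate because $\delta_n>0$. By hypothesis the first factor converges in distribution to a standard normal $Z\sim\operatorname{N}(0,1)$, so the whole problem reduces to showing that the deterministic factor $\delta_n/\delta_n^*$ converges to a real constant.

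The next step is to argue that $\delta_n/\delta_n^*\to\sqrt{M}$. Since $\delta_n$ and $\delta_n^*$ are nonnegative (they play the role of standardizations of $B_n$), we have $\delta_n/\delta_n^*=\sqrt{\delta_n^2/\delta_n^{*2}}$, and continuity of the square root on $[0,\infty)$ together with the assumed convergence $\delta_n^2/\delta_n^{*2}\to M$ gives the claim. At this point I would also note that the hypotheses $\delta_n=\operatorname{O}(1)$ and $\delta_n^*=\operatorname{O}(1)$, combined with the existence of $M$, guarantee $M\in[0,\infty)$, so that $\operatorname{N}(0,M)$ is a well-defined (possibly degenerate) limiting law and we are never in the $M=\infty$ regime.

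Finally I would invoke Slutsky's theorem in the form: if $Y_n\xrightarrow{D}Z$ and $c_n\to c$ for a real constant $c$, then $c_nY_n\xrightarrow{D}cZ$. Applying this with $Y_n=X/\delta_n$, $c_n=\delta_n/\delta_n^*$ and $c=\sqrt{M}$ yields $X/\delta_n^*\xrightarrow{D}\sqrt{M}\,Z$, and $\sqrt{M}\,Z\sim\operatorname{N}(0,M)$, which is exactly the assertion.

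I do not expect a genuine obstacle here; the only points needing a line of care are (i) the positivity of $\delta_n,\delta_n^*$, so that $\delta_n/\delta_n^*$ itself converges rather than merely its absolute value, and (ii) the boundary case $M=0$, in which $\operatorname{N}(0,M)$ degenerates to the point mass at $0$ — Slutsky's theorem still covers this, since convergence in distribution to a constant is the same statement. In the intended application, $\delta_n$ is the normalization for which asymptotic normality of the extended $B_n$ has already been established, while $\delta_n^*$ is the variance-type quantity one actually works with, so this Lemma is precisely the device that transfers the limit law between the two scalings.
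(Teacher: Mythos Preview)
Your proposal is correct and follows essentially the same approach as the paper: multiply and divide by $\delta_n$ to write $X/\delta_n^* = (\delta_n/\delta_n^*)(X/\delta_n)$ and then pass to the limit using the convergence of the deterministic ratio. If anything, your version is more carefully argued---explicitly invoking Slutsky's theorem and handling positivity and the degenerate case $M=0$---whereas the paper states the same computation in two lines.
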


\begin{proof}
Note that\[\frac{X}{\delta ^*_n}\frac{\delta_n}{\delta_n}=\frac{\delta _n}{\delta ^*_n} \frac{X}{\delta _n} \xrightarrow{D} \operatorname{N}(0,\gamma),\]
where 
\[\gamma = \operatorname{Var}\left(\frac{\delta _n}{\delta ^*_n} \frac{X}{\delta_n} \right) \rightarrow \lim_{n\rightarrow\infty} \left( \frac{\delta _n}{\delta ^*_n} \right) ^2 = M.\]
\end{proof}

\begin{theorem}\label{Theorem1}\rm
Let ${\bf X}_1, {\bf X}_2,\dots,{\bf X}_n$ be a sequence of i.i.d. $L \times 1$ random vectors. Let $\phi(\cdot,\cdot)$ be a kernel of degree 2 satisfying $\E[\phi({\bf X}_1,{\bf X}_2)^2]<\infty$ and $\Var[\E(\phi({\bf X}_1,{\bf X}_2)|{\bf X}_1)]=\sigma_1^2>0$. Consider definition \eqref{eq:Bn3} for $B_n$ when $n_1=1$ and let $V_n = \operatorname{Var}(B_n)$, $\tau_n=(n/2)V_n^{1/2}$ and  $W=J_1+J_2-J_3-J_4$, where $\frac{\psi_1(X_1)}{\tau_n} \xrightarrow{D} J_1$, and $J_2, J_3 \hbox{ and } J_4$ are random variables with normal distribution. Then

\begin{equation}\label{eq:teo1convBn_n}
\frac{(n/2)B_n}{\tau_n} \xrightarrow{D} W \hbox{ as  n }\rightarrow \infty.
\end{equation}

\end{theorem}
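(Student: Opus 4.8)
The plan is to reduce the statement to a Hájek-type projection argument for the (non-symmetric, multi-sample) U-statistic structure of $B_n$ when $n_1=1$, and then apply Lemma \ref{lema:converg} to rescale. First I would write $(n/2)B_n$ explicitly as a sum of the five pieces appearing in the alternative form of $B_n$: the two ``outlier'' cross terms $\tfrac{2n_2}{n(n-1)}U^{(1,2)}_{1,n_2}$ and $\tfrac{2n_3}{n(n-1)}U^{(1,3)}_{1,n_3}$, the genuine two-sample term $\tfrac{2n_2n_3}{n(n-1)}U^{(2,3)}_{n_2,n_3}$, and the two within-group terms $U^{(2)}_{n_2}$, $U^{(3)}_{n_3}$ with their coefficients. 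Multiplying by $n/2$ and keeping track of the orders $n_2,n_3 = \Theta(n)$, each of these becomes (up to constants converging to fixed limits) a centered multi-sample U-statistic. The key structural observation is that, unlike in the $n_1,n_2,n_3>1$ case treated by \cite{pinheiro09} where $B_n$ is degenerate, here the presence of the single observation ${\bf X}_1$ in the two cross terms destroys degeneracy: conditionally on ${\bf X}_1$, the term $U^{(1,2)}_{1,n_2}=\tfrac1{n_2}\sum_j\phi({\bf X}_1,{\bf X}^{(2)}_j)$ has a non-vanishing first-order projection $\psi_1({\bf X}_1) := \E[\phi({\bf X}_1,{\bf X}^{(2)}_2)\mid {\bf X}_1] - \theta$, whose variance is the $\sigma_1^2>0$ of the hypothesis. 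This is exactly why the normalization is $\tau_n = (n/2)V_n^{1/2}$ with $V_n=\Var(B_n)=\Theta(n^{-1})$ rather than a degenerate rate.

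Next I would carry out the Hoeffding/Hájek decomposition of each piece. For the within-group and two-sample terms among $G_2,G_3$, the standard multi-sample projection (as in \cite{Sen2006}, \cite{pinheiro09}) gives asymptotically normal contributions $J_2, J_3, J_4$ (after scaling by $\tau_n$), since these are ordinary non-degenerate U-statistics in $n_2$ or $n_3$ observations once we are under $H_0$ and look at the $\sqrt{n}$-scale — note that under $H_0$ the leading-order linear parts of $U^{(2,3)}$, $U^{(2)}$, $U^{(3)}$ are built from the same projection kernel $\psi_1$ evaluated on the $G_2$ and $G_3$ samples, so one must be careful to collect these correctly; the four limit variables $J_1,\dots,J_4$ are jointly Gaussian (or $J_1$ is the limit of $\psi_1({\bf X}_1)/\tau_n$, which is itself a scaled single random variable, hence automatically has a limiting law along subsequences — this is where I would invoke the hypothesis $\psi_1({\bf X}_1)/\tau_n \xrightarrow{D} J_1$ as given). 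The cross term $U^{(1,2)}_{1,n_2}$ contributes its ${\bf X}_1$-projection $\psi_1({\bf X}_1)$ plus a remainder $\tfrac1{n_2}\sum_j(\phi({\bf X}_1,{\bf X}^{(2)}_j)-\psi_1({\bf X}_1)-\cdots)$ which, conditionally on ${\bf X}_1$, is a sum of i.i.d. mean-zero terms of order $n_2^{-1/2}=\Theta(n^{-1/2})$, hence asymptotically normal and absorbed into the $J_i$'s. Assembling, $(n/2)B_n$ equals $\psi_1({\bf X}_1)$ plus a sum of terms each converging, after division by $\tau_n$, to a normal variable, giving $W=J_1+J_2-J_3-J_4$.

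The final step is the rescaling: the raw projection gives $(n/2)B_n/\delta_n \xrightarrow{D} N(0,1)$ for the ``natural'' standard deviation $\delta_n$ coming from the dominant projection (i.e.\ $\delta_n$ proportional to $\sigma_1$ times a known constant in $n_2/n, n_3/n$), while the stated normalization uses $\tau_n = (n/2)V_n^{1/2}$; since both $\delta_n/\tau_n \cdot (\text{const})$ and its inverse are $O(1)$, Lemma \ref{lema:converg} immediately yields convergence under the $\tau_n$ normalization, with the limit being the stated linear combination $W$ (the constant $M$ of the lemma is absorbed into the variances of the $J_i$). The main obstacle I anticipate is \emph{bookkeeping of the joint limit}: one must verify that the four projection pieces $J_1,\dots,J_4$ converge \emph{jointly} (not just marginally), with the correct covariance structure, since $\psi_1$ appears simultaneously in the $G_2$- and $G_3$-sample sums and in the cross terms — a naive term-by-term treatment would double-count or miss covariances. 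Handling the increasing-dimension aspect (the statement allows $L\to\infty$) requires, in addition, a triangular-array CLT (Lyapunov or Lindeberg) with the moment condition $\E[\phi({\bf X}_1,{\bf X}_2)^2]<\infty$ uniform in $L$, rather than the classical fixed-kernel U-statistic CLT; establishing that the Lindeberg condition survives $L\to\infty$ is the delicate point, and I would expect the regularity conditions ``akin to those of \cite{pinheiro09}'' to be exactly what is needed there.
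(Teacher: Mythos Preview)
Your approach is essentially the paper's: Hoeffding-decompose $(n/2)B_n$ into four pieces $W_1,\ldots,W_4$, show $W_1\xrightarrow{P}\psi_1({\bf X}_1)$ via the LLN so that $W_1/\tau_n\xrightarrow{D}J_1$, apply the CLT to the purely second-order sums $W_2,W_3,W_4$ and then Lemma~\ref{lema:converg} to pass from each natural standard deviation to $\tau_n$, and finally combine. Two small remarks: your $L\to\infty$ concern in the last paragraph is misplaced here, since Theorem~\ref{Theorem1} is for fixed $L$ with $n\to\infty$ (the increasing-dimension argument is Theorem~\ref{Theorem2}); and your worry about the \emph{joint} law of the $J_i$ is legitimate, but the paper's proof does not treat it more carefully than you propose --- it assembles the four marginal limits by an appeal to Slutsky.
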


\begin{proof}

We are interested in the distribution of $B_n$ with fixed $L$ and $n \rightarrow \infty$. Is is strightforward to show that $\tau_n=\frac{n}{2}\sqrt{\Var(B_n)}=O(1)$. From the Hoeffding decomposition of $B_n$ we have:

\begin{equation}
\frac{n}{2}B_n=W_1+W_2-W_3-W_4
\end{equation}
where

\begin{eqnarray}
W_1&=&\psi_1(X_1)-\frac{1}{n-1}\sum_{i=1}^{n_2} \psi_1(X_{2i}) - \frac{1}{n-1} \sum_{j=1}^{n_3} \psi_!(X_{3j})+ \nonumber\\
& &  +\frac{1}{n-1}\sum_{i=1}^{n_2}\psi_2(X_1,X_{2i}) +  \frac{1}{n-1}\sum_{j=1}^{n_3}\psi_2(X_1,X_{3j}) \\
W_2&=& \frac{1}{n-1}\sum_{i=1}^{n_2}\sum_{j=1}^{n_3} \psi_2(X_{2i},X_{3j}) \\
W_3&=& \frac{2+n_3}{(n-1)(n_2-1)}\sum_{1 \leq i < j \leq n_2} \psi_2(X_{2i},X_{2j}) \\
W_4&=&\frac{2+n2}{(n-1)(n_3-1)}\sum_{1 \leq i <j \leq n_3} \psi_2(X_{3i},X_{3j}).
\end{eqnarray}

 Under the null hypothesis  $\boldsymbol{X}_1,\hbox{ } \boldsymbol{X}_2 \hbox{ and } \boldsymbol{X}_3$ are identically distributed, thus $W_1$ can be expressed as

\begin{eqnarray}
W_1= \psi _1(X_1) - \frac{1}{n-1} \displaystyle \sum_{i=2}^{n} \psi _1(X_i)+ \frac{1}{n-1}\displaystyle \sum_{j=2}^n \psi _2(X_1,X_j).
\end{eqnarray}
By the Law of Large Numbers  (LLN) follows that

\begin{eqnarray}
\frac{1}{n-1}\displaystyle \sum_{i=2}^n \psi _1(X_i) \xrightarrow{P} \operatorname{E}[ \psi _1(X_1)]=0 \\
\frac{1}{n-1}\displaystyle \sum_{j=2}^n \psi _2(X_1,X_j) \xrightarrow{P} \operatorname{E} [\psi _2(X_1,X_2)]=0.
\end{eqnarray}

Thereby, \[W_1 \xrightarrow{P} \psi _1(X_1).\]

As $\frac{\psi _1(X_1)}{\tau _n} \xrightarrow{D} J_1$ and  $W_1 \xrightarrow{P} \psi _1(X_1)$, then, by Slutsky's theorem, $\frac{W_1}{\tau _n} \xrightarrow{D} J_1 \hbox{ as  n }\rightarrow \infty$.

From the Central Limit Theorem (TCL) we have

\begin{eqnarray}
\frac{W_2-E(W_2)}{\sqrt{\Var(W_2)}}= \frac{\frac{1}{n-1} \displaystyle\sum_{i=1}^{n_2}\sum_{j=1}^{n_3}\psi_2(X_{2i},X_{3j})}{\sqrt{\frac{n_2n_3}{{(n-1)^2}}\tau_2^2}} \xrightarrow{D} N(0,1) \hbox{ as  n }\rightarrow \infty.
\end{eqnarray}

Observe that $\sqrt{\frac{n_2n_3}{{(n-1)^2}}\tau_2^2} = O(1)$ and  $\tau _n = O(1)$. Then by Lemma \ref{lema:converg} follows that

\begin{eqnarray}
\frac{W_2}{\tau _n} \xrightarrow{D} J_2 \sim \operatorname{N}(0,M_2)\hbox{, where }M_2 = \lim_{n\rightarrow\infty} \left( \frac{\frac{n_2n_3}{{(n-1)^2}}\tau_2^2}{\tau ^2_n}\right).
\end{eqnarray}

Similarly,

\begin{eqnarray}
\frac{W_3-\operatorname{E}(W_3)}{\sqrt{\Var(W_3)}}=\frac{\frac{(2+n_3)}{(n-1)(n_2-1)} \displaystyle \sum_{1 \leq i < j \leq n_2} \psi _2(X_{2i},X_{2j})}{\sqrt{\frac{(2+n_3)^2 n_2 \tau_2^2}{2(n-1)^2(n^2-1)}}} \xrightarrow{D}N(0,1) \hbox{ as  n }\rightarrow \infty.
\end{eqnarray}

Other properties are that $\sqrt{\frac{(2+n_3)^2 n_2 \tau_2^2}{2(n-1)^2(n^2-1)}}=O(1)$ and $\tau _n = O(1)$, then by the Lemma \ref{lema:converg} 

\begin{eqnarray}
\frac{W_3}{\tau_n} \xrightarrow{D} J_3 \sim \operatorname{N}(0,M_3) \hbox{, where } M_3 = \lim_{n\rightarrow\infty} \left( \frac{\frac{(2+n_3)^2 n_2 \tau_2^2}{2(n-1)^2(n^2-1)}}{\tau ^2_n} \right).
\end{eqnarray}

Analogously,

\begin{eqnarray}
\frac{W_4 - \operatorname{E}(W_4)}{\sqrt{\Var}(W_4)} = \frac{\frac{(2+n_2)}{(n-1)(n_3-1)}\displaystyle \sum_{1 \leq i < j \leq n_3} \psi _2(X_{3i},X_{3j})}{\sqrt{\frac{(2+n_2)^2n_3\tau_2^2}{2(n-1)^2(n_3-1)}}} \xrightarrow{D} N(0,1)\hbox{ as  n }\rightarrow \infty.
\end{eqnarray}

Once more, $\sqrt{\frac{(2+n_2)^2n_3\tau_2^2}{2(n-1)^2(n_3-1)}}=O(1)$ and $\tau_n = O(1)$, then

\begin{eqnarray}
\frac{W_4}{\tau _n} \xrightarrow{D} J_4 \sim \operatorname{N}(0,M_4) \hbox{, where } M_4=\lim_{n\rightarrow\infty} \left( \frac{\frac{(2+n_2)^2n_3\tau_2^2}{2(n-1)^2(n_3-1)}}{\tau ^2_n} \right).
\end{eqnarray}

Thus, applying Slutsky's theorem we have 

\begin{eqnarray}
\frac{(n/2)B_n}{\tau _n} &=& \frac{(n/2)B_n}{(n/2)V_n^{1/2}} =  \frac{B_n}{\sqrt{\Var(B_n)}} \nonumber \\  
&=& \frac{W_1+W_2 - W_3 - W_4}{\tau _n} \xrightarrow{D} J_1 + J_2 - J_3 - J_4 \hbox{ as  n }\rightarrow \infty. 
\end{eqnarray}

\end{proof}

This result shows that the test statistic asymptotically converges in $n$ to a non-degenerate random variable whose limit distribution depends on the choice of kernel $\phi(\cdot,\cdot)$.

\begin{theorem}\label{Theorem2}\rm
Let ${\bf X}_1, {\bf X}_2,\dots,{\bf X}_n$ be a sequence of i.i.d. $L \times 1$ random vectors. Let $\phi(\cdot,\cdot)$ be a kernel of degree 2 such that
\begin{equation}\label{phistar}
\phi({\bf X}_i,{\bf X}_j)=\frac{1}{L}\sum_{l=1}^{L}\phi^*(X_{il},X_{jl})
\end{equation}
for some kernel $\phi^*(\cdot,\cdot): \R^2\rightarrow \R$, where $X_{il}$ is the $l$-th entry of ${\bf X}_i$.   Define $\phi_1^*(x_{il})=\E[\phi^*(X_{il},X_{jl})|X_{il}=x_{il}]$ and suppose $\Var(\phi_1^*(X_{il}))>0$ and $\Var(\phi^*(X_{il},X_{jl}))<\infty $. Let $B_n$ be defined by \eqref{eq:Bn3} for the case where $n_1=1$, and assume that all conditions in Theorem \ref{Theorem1} hold.   Suppose also that
\begin{equation}\label{covcond1}
\sum_{1\leq l<m\leq n}\E[\phi^*(X_{il},X_{jl})\phi^*(X_{im},X_{jm})]=O(L) 
\end{equation}
and 
\begin{equation}\label{covcond2}
\sum_{1\leq l<m\leq L}\E[\phi_1^*(X_{il})\phi_1^*(X_{jm})]=O(L).
\end{equation}
Then 
\begin{equation}\label{convBn_nL}
\frac{B_n}{\sqrt{\Var(B_n)}}\xrightarrow{D} N(0,1) \quad \mbox{ as } \quad  L\rightarrow \infty.
\end{equation}
\end{theorem}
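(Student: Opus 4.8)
\textbf{Proof proposal for Theorem \ref{Theorem2}.}

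The plan is to establish asymptotic normality as $L\to\infty$ by exploiting the additive structure \eqref{phistar}: because each kernel evaluation $\phi({\bf X}_i,{\bf X}_j)$ is an average of $L$ coordinate-wise contributions $\phi^*(X_{il},X_{jl})$, the whole statistic $B_n$ (with $n$ fixed) can be rewritten as $\frac{1}{L}\sum_{l=1}^L B_n^{*(l)}$, where $B_n^{*(l)}$ is the same bilinear combination of within- and between-group U-statistics built from the scalar kernel $\phi^*$ applied to the $l$-th coordinates. Since the coordinates need not be independent, this is not a sum of i.i.d.\ terms, so I would first perform a Hoeffding-type decomposition of each $B_n^{*(l)}$ into its projection onto single observations ($\phi_1^*$ terms) plus a higher-order remainder, exactly paralleling the $W_1,\dots,W_4$ decomposition used in the proof of Theorem \ref{Theorem1} but now with the roles of $n$ and $L$ interchanged. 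The dominant part of $\sqrt{L}\,B_n$ will be a linear combination of sums of the form $\frac{1}{\sqrt L}\sum_{l=1}^L \phi_1^*(X_{il})$ over the (finitely many) indices $i$, together with a degenerate part coming from the $\psi_2$-type kernels applied coordinatewise.

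The key steps, in order, would be: (1) substitute \eqref{phistar} into \eqref{eq:Bn3} and collect terms so that $B_n = \frac1L\sum_{l=1}^L B_n^{*(l)}$ with an explicit formula for $B_n^{*(l)}$; (2) for each fixed pair/singleton of observations, identify the leading linear term and bound the remainder, using $\Var(\phi^*)<\infty$ to control second moments; (3) apply a central limit theorem for sums of dependent but weakly correlated summands — here the covariance conditions \eqref{covcond1} and \eqref{covcond2} are precisely what guarantee that $\Var\!\big(\sum_{l=1}^L \phi^*(X_{il},X_{jl})\big)=O(L)$ and $\Var\!\big(\sum_{l=1}^L \phi_1^*(X_{il})\big)=O(L)$, so that after dividing by $\sqrt L$ the variances stabilize and no single coordinate dominates (a Lindeberg/Lyapunov-type condition); (4) combine the coordinate-CLT limits for the several sums appearing in the decomposition, checking that the cross-covariances between the different linear pieces also converge, to conclude joint asymptotic normality of the vector of these sums; (5) assemble the linear combination $W_1+W_2-W_3-W_4$ analogue and invoke Slutsky together with Lemma \ref{lema:converg} to pass from the $O(1)$ normalizing constant implied by the CLT to the exact normalization $\sqrt{\Var(B_n)}$, obtaining \eqref{convBn_nL}.

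The main obstacle I anticipate is step (3)–(4): unlike the $n\to\infty$ case, where independence across observations makes the CLT immediate, here the summation is over coordinates $l$ which may be arbitrarily dependent, so I cannot use a classical i.i.d.\ CLT. The conditions \eqref{covcond1} and \eqref{covcond2} give control of the variances (they are $O(L)$ rather than $O(L^2)$, ruling out long-range dependence strong enough to destroy the $\sqrt L$ scaling), but to actually invoke a CLT one needs either an additional mixing/$m$-dependence hypothesis on the coordinate sequence or an argument that the relevant sums are themselves martingale-like. I would handle this by noting that under $H_0$ the $\phi_1^*(X_{il})$ are, for fixed $i$, a stationary sequence in $l$ with summable covariances, and appeal to a CLT for triangular arrays with the Lyapunov condition verified via the finite fourth-moment-type bound implied by $\Var(\phi^*)<\infty$ combined with \eqref{covcond1}; making this rigorous, and in particular verifying that the degenerate ($\psi_2$) part is genuinely lower order in $L$ so it does not contribute to the limit, is where the real work lies. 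A secondary technical point is bookkeeping the many $O(1)$ ratios of normalizing sequences (the $M_2,M_3,M_4$-type constants) and confirming they combine into variance $1$ after the final standardization.
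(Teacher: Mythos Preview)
Your overall strategy---push the additive structure \eqref{phistar} through the Hoeffding decomposition so that $\psi_1(\mathbf X_i)$ and $\psi_2(\mathbf X_i,\mathbf X_j)$ become $L^{-1}\sum_{l}$ of coordinate-wise analogues, then use \eqref{covcond1}--\eqref{covcond2} to show these have variance $O(L^{-1})$---is exactly what the paper does. The paper's proof, however, is considerably terser than your plan: it writes $\psi_1(\mathbf X_i)=L^{-1}\sum_l \psi_1^*(X_{il})$ and the analogous expression for $\psi_2$, invokes the two covariance conditions to obtain $\Var(\psi_1(\mathbf X_i))=O(L^{-1})$ and $\Var(\psi_2(\mathbf X_i,\mathbf X_j))=O(L^{-1})$, and then simply asserts that \eqref{convBn_nL} follows for fixed $n$ and $L\to\infty$. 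It does not separately treat the $W_1,\dots,W_4$ pieces, does not invoke Lemma~\ref{lema:converg}, and does not spell out a CLT argument over $l$.

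Your steps (3)--(4), and the obstacle you flag there, are therefore more careful than the paper itself. The concern you raise---that the coordinate sums $\sum_l \phi_1^*(X_{il})$ involve possibly dependent summands and that \eqref{covcond1}--\eqref{covcond2} control variances but do not by themselves license a CLT---is legitimate, and the paper's proof does not resolve it either; it treats the variance bounds as sufficient and leaves the passage to asymptotic normality implicit. So you have correctly identified the skeleton of the paper's argument and have additionally located a point where both your outline and the published proof would need an extra hypothesis (e.g.\ weak dependence or a martingale structure across $l$) or a more explicit CLT citation to be fully rigorous.
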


\begin{proof}
We start writing $\psi _1(X_i)$ and $\psi _2(X_i,X_j)$ as a function of $\phi ^*_1(\cdot)$ and $\phi^*_2(\cdot,\cdot)$. Note that

\begin{eqnarray}
\psi_{1} \left(\mathbf{X}_{i}\right)&=&\frac{1}{L} \sum_{l=1}^{L} \psi_{1}^{*}\left(X_{i l}\right) \\
\psi_{2}\left(\mathbf{X}_{i}, \mathbf{X}_{j}\right)&=&\frac{1}{L} \sum_{l=1}^{L} \phi ^* (X_{il},,X_{jl}) - \frac{1}{L} \sum_{l=1}^{L} \psi ^*_1(X_{il}) \nonumber \\ 
& &-  \frac{1}{L} \sum_{l=1}^{L} \psi ^*_1(X_{jl}) - \theta
\end{eqnarray}
where

\begin{eqnarray}
\psi ^*_1(X_{il}) &=& \phi ^*_1(X_{il})- \theta \\
\phi ^*_1(x_{il})&=& \E[\phi ^* (X_{il},X_{jl}) \mid X_{il} = x_{il}] \\
\phi ^*_2(x_{il},x_{jl}) &=& \E[ \phi ^*(X_{il},X_{jl}) \mid X_{il}=x_{il}, X_{jl} =x_{jl}].
\end{eqnarray}

We can write $\psi_1(\cdot)$ as

\begin{eqnarray}
\psi _1(\boldmath{X}_i)&=& \frac{1}{L} \sum_{l=1}^{L} \left[ \phi ^*_1(X_{ij})-\theta \right],
\end{eqnarray}
or

\begin{eqnarray}
\psi _1(\boldmath{X}_i) &=& \frac{1}{L} \sum_{l=1}^{L} \psi _1^*(X_{ij}).
\end{eqnarray}

Thus the variance of $\psi_1(\cdot)$ is given by

\begin{eqnarray}
\Var(\psi _1(\boldmath{X}_i)) &=& \Var \left[ \frac{1}{L} \sum_{l=1}^L \psi _1^*(X_{il}) \right].
\end{eqnarray}
By (\ref{covcond1}) we have that

\begin{eqnarray}
\Var\left(\psi_{1}\left(\mathbf{X}_{i}\right)\right) &=&\frac{1}{L^{2}}\left\{\sum_{l=1}^{L} \Var\left[\psi_{1}^{*}\left(X_{i l}\right)\right]\right.\nonumber\\
&&\left. +2 \sum_{1 \leq l<m \leq L} \Cov\left(\psi_{1}^{*}\left(X_{i l}\right), \psi_{1}^{*}\left(X_{im}\right)\right)\right\} \nonumber\\
&=& O\left(L^{-1}\right)
\end{eqnarray}
and by (\ref{covcond2}) the variance of $\psi _2(,)$ is

\begin{equation}\begin{aligned}
\Var\left(\psi_{2}\left(\mathbf{X}_{i}, \mathbf{X}_{j}\right)\right)=& \frac{1}{L^{2}}\left\{\sum_{l=1}^{L} \Var\left(\phi^{*}\left(X_{i l}, X_{j l}\right)\right)+ \right. \\
& + 2 \sum_{1 \leq l<m \leq L} \Cov\left(\phi^{*}\left(X_{i l}, X_{j l}\right), \phi^{*}\left(X_{im}, X_{jm}\right)\right) \\
& \left.+2 \Var\left(\frac{1}{L} \sum_{l=1}^{L} \psi_{1}^{*}\left(X_{i l}\right)\right)\right\} \\
=& O\left(L^{-1}\right).
\end{aligned}
\end{equation}

Thus, for fixed $n$ and for $L \rightarrow \infty$ it follows that

\begin{eqnarray}
\frac{B_n}{\sqrt{\operatorname{Var}(B_n)}}=V_{n}^{-1 / 2} B_{n} \xrightarrow{D}  N(0,1).  
\end{eqnarray}

\end{proof}

This result is fundamental to our inference procedure for clustering in the HDLSS context.

\subsection{Variance of  $B_n$}\label{subsec:varBn}

In the $utest$ the estimation of $B_n$'s variance under $H_0$ plays an essential role in hypothesis testing (see \cite{cybis18} ). As shown below, even under $H_0$, the variance of $B_n$ depends on the particular group configuration under consideration. For the homogeneity test of Section \ref{sec:uclust3}, we must evaluate this variance for the many group configurations visited in an optimization algorithm. This variance estimation is performed through a resampling procedure, however it becomes computationally expensive to perform one resampling procedure for each individual group size configuration. To circumvent this issue, \cite{cybis18} propose a reweighting scheme taking advantage of analytic calculations for the variance for the case $K=2$ groups. They are able to compute all variances from a single resampling procedure. In this section we extend their argument to the case of $K=3$ groups. 

In this Section we provide an estimator for the variance of $B_n$ under $H_0$ based on U-statistics properties of $B_n$. For cases where all groups have more than two elements, the Hoeffding decomposition of $B_n$ can be found in  \cite{pinheiro09} which is given by

\begin{eqnarray}\label{eq:decompPinheiro}
B_n &=& \left( \frac{2}{n(n-1)} \right) \sum_{1 \leq i < j \leq n} \eta _{nij} \psi_2(X_i,X_j),
\end{eqnarray}
where $\psi_2(\dot,\dot)$ is the second order term of the Hoeffding decomposition of $B_n$ and

\begin{eqnarray}\label{eq:etaPinheiro}
\eta_{nij}  &=& \left\{
\begin{array}{ll}
1, & \hbox{ if } i \hbox{ and } j \hbox{ are from different groups} \\
\\
-\frac{(n-n_g)}{n_g-1}, & \hbox{ if } i \hbox{ and } j \hbox{ are from the same group } g. 
\end{array}
\right.
\end{eqnarray}

Thereby,

\begin{eqnarray}\label{eq:vareta}
\operatorname{Var}(B_n)= \left( \frac{2}{n(n-1)} \right) ^2 \tau _2^2 \sum_{1 \leq i<j \leq n} \eta^2_{nij}.
\end{eqnarray}
where $\tau _2 ^2=\Var(\psi _2(X_1,X_2))$. From \cite{pinheiro09} we also know that

\begin{eqnarray}\label{eq:sumeta}
\sum_{1 \leq i<j \leq n} \eta_{n i j}^{2}={n \choose 2}(G-1)\left\{1+\frac{1}{n} \sum_{g=1}^{G} \frac{n-n_{g}}{\left(n_{g}-1\right)(G-1)}\right\}.
\end{eqnarray}

\noindent For the case in which we have three groups, $G_1$, $G_2$ and $G_3$, with sizes $n_1$, $n_2$ and $n_3$, respectively, where $n_1 + n_2 + n_3 = n$, it can be rewritten as

\begin{eqnarray}\label{eq:Cn}
\label{somaeta}
C_n(n_1,n_2)=\sum_{1 \leq i<j \leq n} \eta_{n i j}^{2}=2{n \choose 2}\left\{1+\frac{1}{n} \sum_{g=1}^{3} \frac{n-n_{g}}{2\left(n_{g}-1\right)}\right\},
\end{eqnarray} 
and therefore

\begin{eqnarray}\label{eq:varbncn}
\Var(B_n) &=& \left( \frac{2}{n(n-1)} \right) ^2 \tau _2^2 C_n(n_1,n_2)=V_{n_1,n_2}.
\end{eqnarray}

\noindent Note that only $\tau_2^2$ depends on the probability distribution of the data. Given three groups of sizes $n_1$, $n_2$ and $n_3$, the variance of $B_n$ for this configuration is estimated through a resampling  procedure. For optimization purposes, it is not interesting to perform a resampling procedure for each group configuration, so the idea is to use (the relation) expression  \eqref{eq:varbncn} to estimate $B_n$'s variance for any group configuration from a single resampling procedure. Let
$G_1^*$, $G_2^*$ and $G_3^*$, with sizes $n_1^*$, $n_2^*$ and $n_3^*$, respectively, where $n_1^* + n_2^* + n_3^* = n$, be an other group configuration for the same data set. From \eqref{eq:varbncn} it follows that

\begin{eqnarray}\label{eq:vv*}
V_{n_1^*,n_2^*}&=& \frac{C_n(n_1^*,n_2^*)}{C_n(n_1,n_2)}V_{n_1,n_2}.
\end{eqnarray}

Thus estimating $V_{n_1,n_2}$ through a resampling procedure is sufficient to estimate the variance of $B_n$ for any other group configuration. Although the variance of $B_n$ is estimated under $H_0$, we note that the choice of $n_1$ and $n_2$ may be important to reduce the bias of the variance estimator. 
To understand the $C_n(\cdot,\cdot)$ function's behavior we plot \eqref{eq:Cn} 
assuming that $n_1, n_2, n_3\geq2$ and $n=n_1 + n_2 + n_3$. 
As $\tau_2^2$ does not depend on group sizes, the behavior of $C_n(\cdot,\cdot)$ governs the behavior of $B_n$'s variance and Figure \ref{fig:Cn} shows that smaller values are obtained when groups have balanced sizes, while larger values of $C_n(\cdot,\cdot)$ are obtained when group sizes are unbalanced.


\begin{figure}[htb!]
\centering
\includegraphics[scale=0.4]{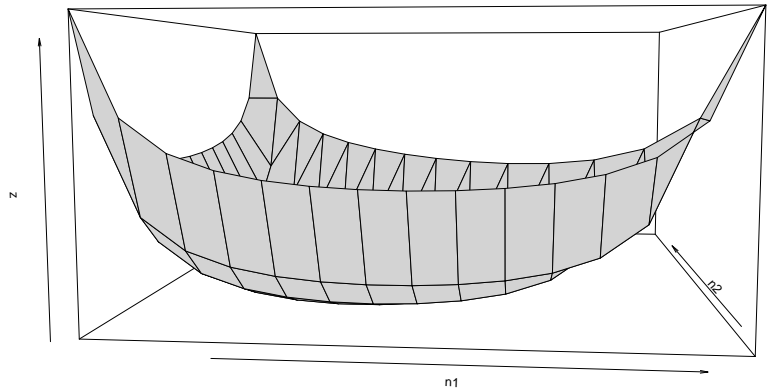}
\vspace{-1cm}
\caption{ $C_n(\cdot,\cdot)$ function behavior for $n_1, n_2, n_3\geq2$ and $n=n_1 + n_2 + n_3$.}
\label{fig:Cn}
\end{figure}

\pagebreak
\newpage

\subsubsection{Variance of the extended $B_n$}

We propose an extended statistic $B_n$ in \eqref{eq:Bn3} to accommodate cases in which the data set is divided into three groups, one of which has size one. For inference purposes it is essential establish a strategy to estimate the variance of the extended $B_n$. Through the Hoeffding decomposition of \eqref{eq:Bn3} (see Suplementary Material) we have that the variance of the extended $B_n$ is

\begin{eqnarray}\label{eq:varBn1}
\Var(B_n)= \zeta _1(n) \tau_1^2 + \zeta _2(n,n_2) \tau _2^2,
\end{eqnarray}
where $\tau _1^2 = \operatorname{Var}(\psi _1(X_1))$ and $\tau _2^2 = \operatorname{Var}( \psi _2(X_1,X_2))$ are, respectively, the variance of the first and the  second order terms of the Hoeffding decomposition,

\begin{eqnarray}\label{eq:zeta12}
\zeta _1(n)&=&\frac{4}{n(n-1)}, \nonumber\\
\zeta _2(n,n_2)&=& \frac{4}{n^2(n-1)} + \frac{4n_2n_3}{n^2(n-1)^2}+ \frac{2n_2(2+n_3)^2}{n^2(n_2-1)(n-1)^2}\\
& & +\frac{2n_3(2+n_2)^2}{n^2(n_3-1)(n-1)^2}, \nonumber
\end{eqnarray}
$n_1=1$, and $n_3=n-n_2-1$. Note that in expression \eqref{eq:varBn1} the terms $\tau_1^2$ and $\tau_2^2$ depend on the probability distribution of the data,  $\zeta _1(\cdot)$ depends only on $n$ and  $\zeta_2(\cdot,\cdot)$ depends on $n$ and $n_2$ since $n_3=n-n_2-1$. Thus for another group configuration keeping one of the groups with size one, the only change occurs at $n_2$, say $n_2^*$. For this new group configuration, the extended $B_n$ variance is given by

\begin{eqnarray}\label{eq:varBn1*}
\Var(B_n) = \zeta _1(n) \tau _1^2 + \zeta _2(n,n_2^*) \tau _2^2. 
\end{eqnarray}

Again, the choice of $n_2$ may affect the variance of the estimator. Denoting \eqref{eq:varBn1} by $V_{n_2}$ and \eqref{eq:varBn1*} by $V_{n_2^*}$, we have
 from simple algebra that

\begin{eqnarray}\label{eq:varBn12*}
V_{n_2^*} &=& V_{n_2} + [ \zeta _2(n,n_2^*) - \zeta _2(n,n_2)] \tau _2^2. 
\end{eqnarray}

For a given $n_2$ we can estimate $V_{n_2}$ from a resampling procedure. 
Additionally, an estimate for $\tau _2^2$ can be obtained from the strategy employed to estimate the variance of $B_n$ without outlier through expression  \eqref{eq:varbncn} as

\begin{eqnarray}\label{eq:tauest}
\widehat{\tau}_2^2&=& \frac{\widehat{V}_{n_1,n_2}}{C(n_1,n_2)\left( \frac{2}{n(n-1)}\right)^2}.
\end{eqnarray}

Thus we have a procedure to estimate the extended $B_n$´s variance for any group configuration from only two independent resampling procedures, through expression

\begin{eqnarray}\label{eq:varBn1est}
\widehat{V}_{n_2^*} &=& \widehat{V}_{n_2} + [ \zeta _2(n,n_2^*) - \zeta _2(n,n_2)] \widehat{\tau}_2^2,  
\end{eqnarray}

where $\widehat{\tau}_2^2$ is obtained from the resampling employed to estimate the variance of $B_n$ without outlier and $\widehat{V}_{n_2}$ is obtained from an additional resampling specific to $n_1=1$ case. Thus, taking into account the resampling procedure performed to estimate the variance of  $B_n$ when the groups are larger than two and, with one more resampling procedure for the size one group, we have an estimator for extended $B_n$'s variance.

In Figure \ref{fig:zeta2} we have the behavior of $\zeta_2(n,n_2)$ as a function of $n_2$.

\begin{figure}[h!]
\centering
\includegraphics[scale=0.25]{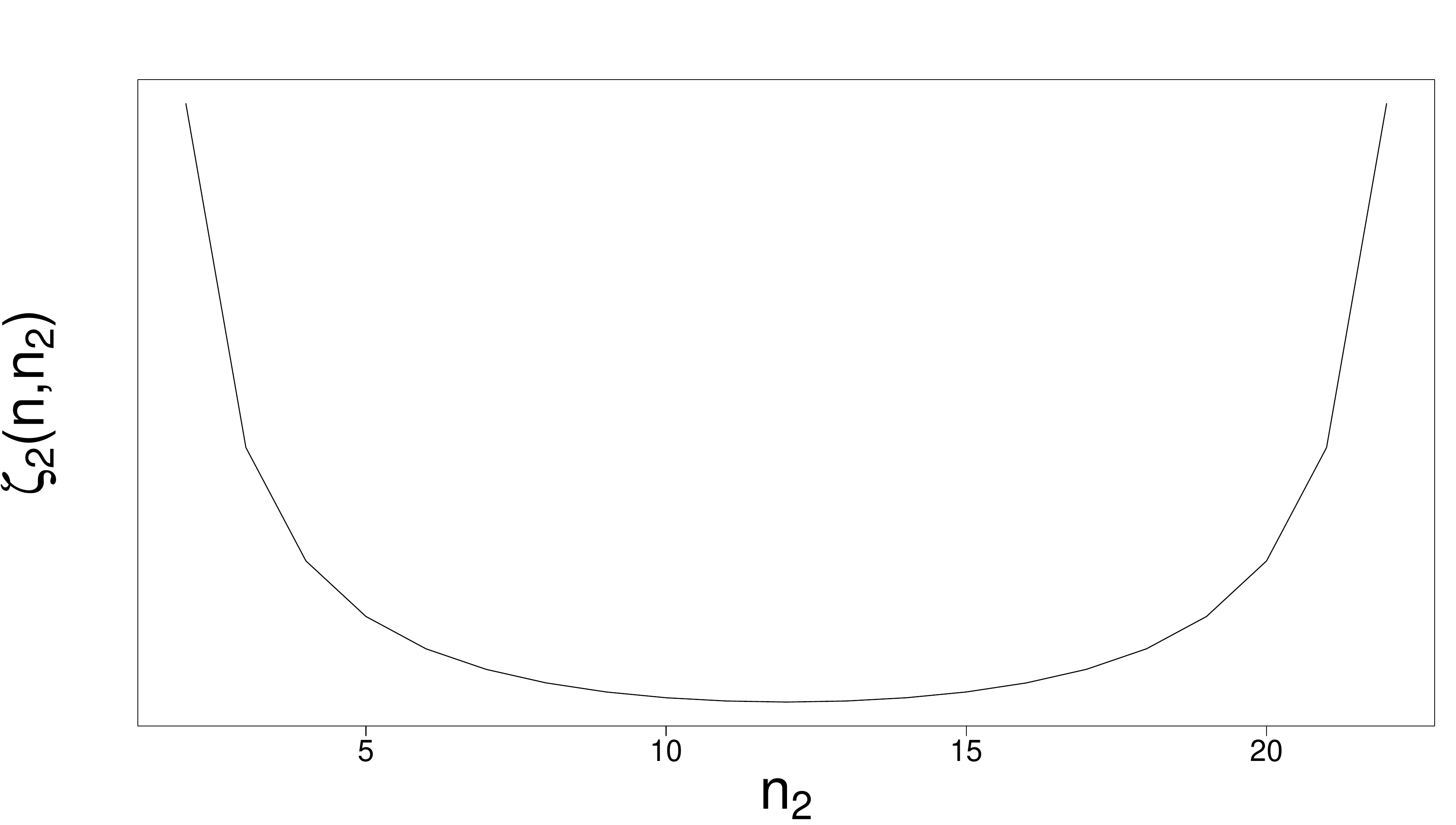}
\caption{Behavior of function $\zeta_2(n,n_2)$  for a given $n$, with $n_1=1$ and $n=1+n_2+n_3$.}
\label{fig:zeta2}
\end{figure}
These results are fundamental for the development of feasible algorithms that find significant clusters which is computationally challenging problem.



\pagebreak


\section{Homogeneity test for three groups}\label{sec:uclust3}

 Assessment of group homogeneity is a great challenge for standard statistics, especially in the HDLSS context. The $uclust$ algorithm presented in \cite{cybis18} and \cite{valk20} is effective to assess overall group homogeneity by verifying whether there exists some significant partition of the data in two groups. Here we are proposing an extension of the $uclust$ algorithm for data partitions in three groups $G_1$, $G_2$ and $G_3$.  
A combinatorial procedure like the one proposed by \cite{valk12} in which a $utest$ is applied for each possible partition of all group elements into three subgroups has serious computational restrictions due to the exponential increase in the number of tests that need to be performed.


\subsection{Total of combinations}\label{Asec:comb}

In order to develop the homogeneity test we require the number of different group configurations that can be formed by separating $n$ elements, $ x_1, x_2, \dots, x_n $ into three groups, $G_1,G_2$ e $G_3$. Follows from \cite{valk12} that the number of combination of $n$ elements into two groups is 
$p(n) = 2^{n-1}-n-1$.
Then if we divide $n$ elements into three groups where one of them  has size 1, it follows that the number of combinations is 
 
\begin{eqnarray}
\label{Aeq:delta3}
\delta _3(n)&=& (2^{n-2}-n)n.
\end{eqnarray}

Now we focus on the case where all groups have more than one element. We can fix, without loss of generality, $x_1$ as an element that belongs to the first group, $ G_1 $. Thus, we still have $ n-1 $ elements to be distributed among the three groups. Since we cannot have a unitary group, we need at least one more point for the first group. This group can have up to $ n-4 $ observations, since the remaining sets must necessarily have two elements each. Thus, we then have the following number of possible first sets

\begin{eqnarray*}
{n-1 \choose 1} + {n-1 \choose 2} + \cdots + {n-1 \choose n-5}.
\end{eqnarray*}
For the remaining elements that need to be divided into two clusters, just divide them into two groups with at least $2$ elements in each using the function $p(\cdot)$. Combining these results, we have a number of different configurations of non-unitary groups when we separate $n$ elements into $3$ groups given by

\begin{eqnarray}
\label{Aeq:s3}
S_3(n)&=& {n-1 \choose 1}p(n-2) + {n-1 \choose 2}p(n-3) + \cdots + {n-1 \choose n-5}p(4) \nonumber \\
&=& \sum_{k=1}^{n-5}{n-1 \choose k}p(n-k-1).
\end{eqnarray}

We can still rewrite this equation on a recurring basis. Note that if we already know how many configurations of groups we have with $n$ non-unitary elements, and how many configurations with a unitary group, then it is possible to calculate $S_3(n + 1)$ as

\begin{eqnarray}
\label{Aeq:s3rec}
S_3(n+1)&=& 3 S_3(n) + \delta _3(n).
\end{eqnarray}

With such equations we can rewrite $S_3(n) $ as

\begin{eqnarray}\label{Aeq:reco}
S_3(n)&=& \frac{233 (3^{n-6}) +1+n+n^2-(2+n)2^{n-1}}{2}.
\end{eqnarray}

Thus, the number of different group configurations where at most one of them  has size one is given by

\begin{eqnarray}
\label{eq:gamma3}
\gamma _3(n)&=& \frac{233 (3^{n-6}) +1+n+n^2-(2+n)2^{n-1}}{2} + \delta _3(n)\nonumber\\
 &=& \frac{233 (3^{n-6}) +1+n-n^2-2^{n} }{2} .
\end{eqnarray}
%
%
%

%
\noindent which becomes computationally onerous, especially for large sample size $n$.  To address this issue, we proceed similarly to  \cite{cybis18} proposing an optimization procedure to assess group homogeneity by finding the group configuration $G_1$, $G_2$ and $G_3$ that maximizes the objective function
\begin{equation}\label{eq:Bnst}
 f(G_1,G_2,G_3)=\frac{B_n}{\sqrt{\Var(B_n)}}.
\end{equation}
 By maximizing the standardized $B_n$ we must apply only one test. If this three group partition is found significant, then there is at least one subgroup that is significantly different from the others.  However, if $H_0$ is not rejected for this partition, then all other three group partitions will also be non-significant, and the whole data will be considered homogeneous. While only the group configuration with maximum standardized $B_n$ is tested we have to consider the distribution of $B_n$'s maximum under $H_0$. Making the untrue, but useful, simplifying assumption that the $B_n$'s are independent for different group configurations, the asymptotic cumulative distribution function of the maximum standardized $B_n$ is given by
\begin{equation}\label{eq:testmaxBn}
 F_{\mbox{max}}(x)= \Prob\left(\mbox{max}\left(\frac{B_n}{\sqrt{\Var(B_n)}}\right)<x\right)=\Phi(x)^{n^*},
\end{equation}
where $n^*=\gamma_3(n)$, for $\gamma_3(n)$ defined in \eqref{eq:gamma3} and $\Phi(\cdot)^{n^*}$ is the standard normal cumulative distribution function at the power $n^*$. For $F_{\mbox{max}}(x)>1-\alpha$, we reject the null hypothesis of overall group homogeneity with $\alpha$ significance level. 

 The number of tests increases rapidly, even for moderate sample size due to the combinatorial nature of our approach. The maximum distribution in \eqref{eq:testmaxBn} adequately accounts for multiple testing for reasonably small values of $n^*$. However, this approach has some shortcomings since $n^*$ rapidly increases.  Proceeding similarly to  \cite{valk20} and considering the simplifying assumption that the $B_n$'s are independent, we use extreme value theory and model it as Gumbel. However, the Gumbel approximation is only valid for very large values of $n^*$. Thus, for small $n$ we employ the standard max distribution of \eqref{eq:testmaxBn}, and when $n^*\geq 2^{28}$ the Gumbel distribution.

\subsection{The clustering method $uclust3$}\label{subsec:uclust3}

Our homogeneity test in the Section \ref{sec:uclust3} is a method that finds the configuration of \emph{three} subgroups that maximizes the standardized $B_n$.  This is appropriate for the context, since if the homogeneity test accepts the null for this partition, then it would also be accepted for all other partitions. However, the standardized $B_n$ might not be the best criteria to choose between competing partitions when more than one significant group separation exists.  This issue is addressed in \cite{cybis18} and arises from the fact that the variance of $B_n$ has different magnitudes depending on subgroup sizes $n_1$ and $n_2$ (expression \eqref{eq:Cn} dictates the relationship between variances, which is shown in Figure \ref{fig:Cn}). Consequently, this criteria favours partitions with group sizes of smaller variance, namely  $n_1,n_2\approx n/3$. We note that  the magnitude of the variance is quite different when we have a size one group, being much smaller in that case. Again if we use the standardized $B_n$ statistic as a criterion, we will have an effect of choosing groups of size one over the configurations of groups that present greater variance according to the Figure \ref{fig:Cn}.

Considering this issue, we proceed similarly to \cite{valk20} starting by testing overall group homogeneity which is based on maximum of standardized $B_n$. If the dataset is not homogeneous we adopt instead the maximum $B_n$ as the criteria for finding the configuration that better divides the sample into \emph{three} groups. Thus our significance clustering algorithm $uclust3$ will find the partition with maximum $B_n$ among the universe of all significant partitions in \emph{three} groups. 
 This is sufficient to ensure that the chosen configuration is statistically significant. 
 However, it is not efficient to find all arrangements of the data in \emph{three} groups that are statistically significant. Furthermore, we cannot simply test the clusters that maximizes $B_n$ since there are non-homogeneous samples for which this maximal partition is not significant.

 Based on these characteristics of the $B_n$ we propose a restricted search algorithm, which is based on the behaviors of the $B_n$'s variances (see Figure \ref{fig:Cn}). It starts from the group configuration that maximizes $B_n$ and if that partition is not significant, it searches for partitions whose $B_n$'s variances are smaller than the previous one. This is suitable since only for smaller variances, standardized $B_n$ can be significant. The equation  \eqref{eq:vv*} is used to avoid a new resampling 
 procedure to estimate the $B_n$'s variance. As there is a difference in the magnitudes of the $B_n$'s variances (see Figures \ref{fig:Cn} and \ref{fig:zeta2})  this algorithm treats separately the cases when we have a group of size one and the cases with no outlier. The detailed algorithm can be found in Section S3 of the supplementary materials.


\section{Simulation Studies}\label{sec::simulation}

In this section we present simulation studies in order to evaluate some aspects of our proposed methodology. For that we simulate canonical data and use the euclidean distance on our studies, but those are not mandatory for our methods. As presented in Section \ref{subsec:varBn}, $B_n$'s variance has a behavior that depends on the groups sizes. Moreover when we have a size one group, the order of magnitude of the  $B_n$'s variance is quite different when compared to cases in which groups sizes are larger than one. For this reason, our simulations studies typically have a configuration in which a group has size 1 and another configuration in which all groups have more than one element.  Figures \ref{fig:Cn} and \ref{fig:zeta2} show that $B_n$'s variance is smaller at a central group configuration, where the three groups have approximately the same number of elements. Conversely, the variance is greater for extreme group configurations, in which one of the groups has only two elements and the other has $n/2$ elements (or $n-1-n_2$ elements for cases where we have a group of size one). Naturally, the third group's size is defined as $n_3=n-n_1-n_2$. These scenarios are explored in our simulation studies. 

In the Section \ref{subsec:ut} we evaluate the empirical size and power of the proposed {\bf $utest$ for homogeneity of three groups}.  Section \ref{subsec:ht} present a simulation study to evaluate the empirical properties of the homogeneity test $uclust3$. The ability to find correct clusters of $uclust3$ and $kmeans$ clustering are compared in Section \ref{subsec:cc}.

\subsection{Simulations for the $utest$ }\label{subsec:ut}

We present here a simulation study to evaluate the empirical performance of the $utest$ for three groups. We simulate data from independent normally distributed (i.i.d.) samples  divided in three groups $G_1$, $G_2$ and $G_3$. The elements of the $L$ dimensional vectors in $G_1$ are generated from i.i.d. normal with mean $m_1=0$ and standard deviation equal to one. The vectors in $G_2$ and $G_3$ have the same properties with mean $m_2$ and $m_3$, respectively. 
In order to allow a graphical representation of the power of the test which is the proportion of rejection considering a significance level $\alpha$ (the power curves), the groups were symmetrically separated and on the x-axis the difference $m_2 - m_1$ is reported. The difference $m_3-m_2 = m_2 - m_1$. The sample size  $n$ takes values in $\{10,20,50\}$. Figure \ref{fig:pw1}  presents power curves of  the $utest$ for \emph{three} groups with separation degree $m_2 - m_1$,  where the vectors have  dimension $L=1000$ (gray) and $L=2000$ (black) and we have 100 replications of each scenario. Furthermore group $G_1$ has size one and group $G_2$ was set to have size  $n_2=\lfloor n/3\rfloor$, where  $\lfloor x \rfloor$ means the integer part of $x$.  Naturally the  third group's size is defined as $n_3=n-1-n_2$. The significance level used to determine whether the test rejects the null hypothesis that the elements in $G_1$, $G_2$ and $G_3$ have the same distribution was $\alpha = 0.05$.

\begin{figure}[h!]
\begin{center}
\includegraphics[scale=0.25]{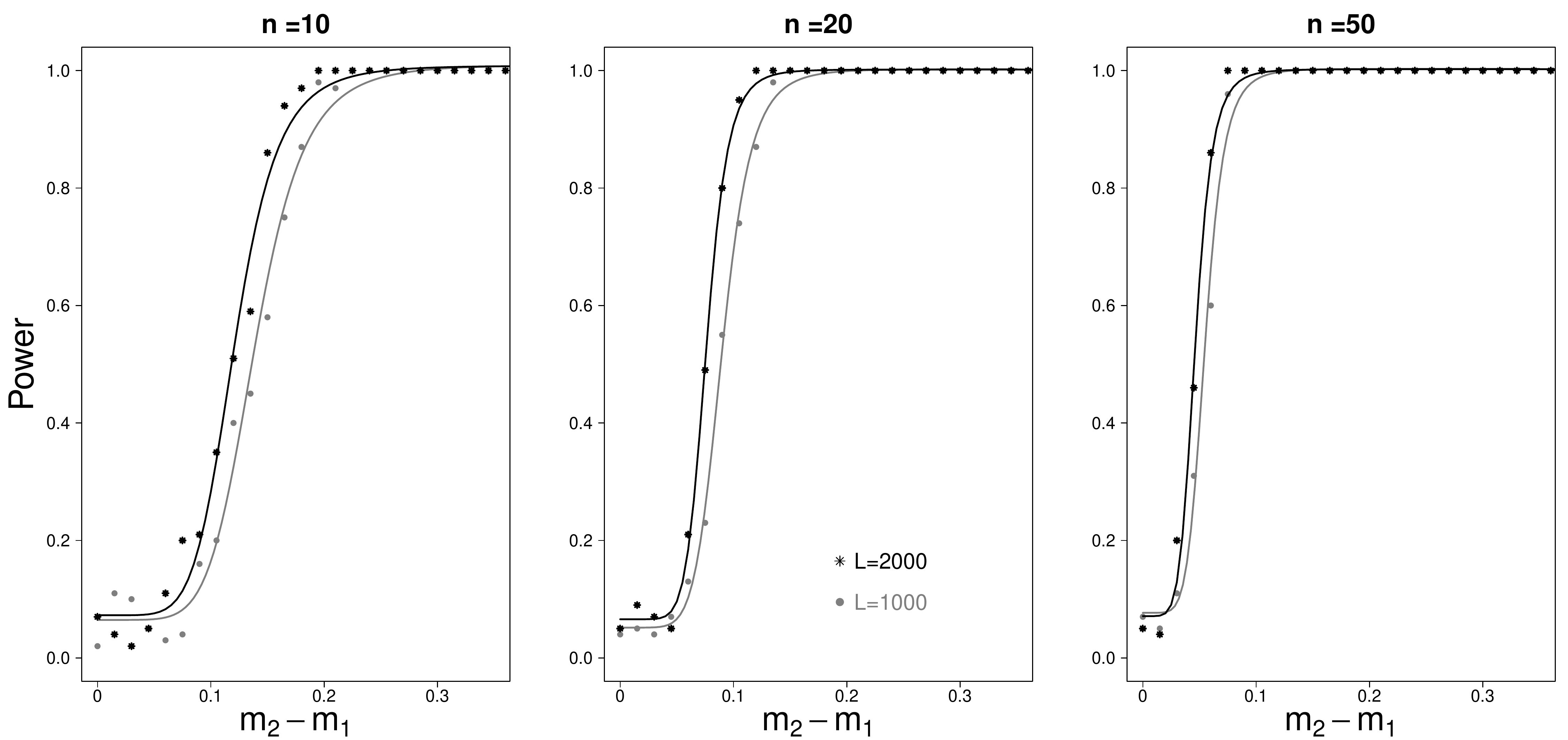}
\end{center} \vspace{-2mm}
\caption{Power curves of $utest$ for two dimension $L=1000$ (gray) and $L=2000$ (black) for 100 replications of each scenario of $n \in \{10,20,50\}$ with $\alpha=0.05$.
}
\label{fig:pw1}
\end{figure}

The empirical results obtained in this study reported in Figure \ref{fig:pw1} corroborate the theoretical properties. As the $L$ increases, the rejection ratio also increases  and as the groups become more separated, the power increases. When there is no separation, $m_2-m_1=0$, the rejection ratio is close to the significance level $\alpha$ suggesting control of Type I error. Similar results are found for cases where all groups have more than one element (see Figure S1 in the Supplementary Material).

\pagebreak
\subsection{Simulations for the homogeneity test in $uclust3$}\label{subsec:ht}

To evaluate the statistical properties of the homogeneity test $uclust3$ considering the max distribution \eqref{eq:testmaxBn} with the Gumbel correction when appropriate, we simulate data  with the same characteristics as the data in Section \ref{subsec:ut}. For each sample size $n$ in $\{10,20,50\}$, group $G_1$ has size one and group $G_2$ was set to have size $n_2=2$ and $n_2=n/2$, and consequently the  third group's size was defined as $n_3=n-1-n_2$. Table \ref{tab:pw1} shows the proportion of rejection of the null hypothesis for significance level $\alpha=0.05$ considering two scenarios of $(m_2,m_3)$ and the dimension $L$ taking values in $\{1000,2000\}$.

\begin{table}[htb!]
\centering
\caption{Empirical power of the homogeneity test $uclust3$  with a group of size one}
\label{tab:pw1}
\begin{tabular}{c|cc|cc}
\toprule
& \multirow{2}{*}{$(m_2,\,\,m_3)$} & \multirow{2}{*}{($n_2$)} & \multicolumn{2}{c}{Dimension $L$} \\
\cline{4-5}
$n$ & & & 1000 & 2000 \\
\cline{4-5} 
\midrule
\multirow{4}{*}{10} & \multirow{2}{*}{(0.25, 0.5)} & 2 & 0.27 & 0.36 \\
\cline{3-5}
& & 5 & 0.69 & 0.89 \\
\cline{2-5}
& \multirow{2}{*}{(0.5,  1)} & 2 & 0.22 & 0.25 \\
\cline{3-5}
& & 5 & 0.98 & 1 \\
\midrule
\multirow{4}{*}{20} & \multirow{2}{*}{(0.25, 0.5)} & 2 & 0.93 & 1 \\
\cline{3-5}
& & 10 & 1 & 1 \\
\cline{2-5}
& \multirow{2}{*}{(0.5,  1)} & 2 & 0.9 & 0.89 \\
\cline{3-5}
& & 10 & 0.92 & 1 \\
\midrule
\multirow{4}{*}{50} & \multirow{2}{*}{(0.25, 0.5)} & 2 & 0.68 & 0.68 \\
\cline{3-5}
& & 25 & 1 & 0.99 \\
\cline{2-5}
& \multirow{2}{*}{(0.5,  1)} & 2 & 0.99 & 0.96 \\
\cline{3-5}
& & 25 & 1 & 1 \\
\bottomrule
\end{tabular}
\end{table}

We can observe that even in an extreme group configuration, where the group  $G_1$ has size one and the group $G_2$ has size two,  the method presents consistent empirical power to reject the null hypothesis. The power increases as $L$ and/or $n$ and/or the difference between $m_2$ and $m_3$ increases, emphasizing the inherent properties of the method. 

Supplementary Table S1 presents estimates of type I error rates for $uclust3$. The significance level considered in this simulations was $\alpha=0.05$ and we can observe that the method presents an adequate control of the Type I Error for cases where $L>>n$ (typically HDLSS scenario). Supplementary Table S2 presents power of the  $uclust3$ for group configurations of sizes greater than 1. For small sample size $n$ the test had more difficulty in finding the correct clusters. However, for larger $n$ the method showed an excellent performance.

\subsection{Simulations for finding correct clusters}\label{subsec:cc}

In order to evaluate the accuracy of our clustering method, we present simulation studies comparing $uclust3$ with $kmeans$ clustering, one of the most popular clustering algorithms. We refer the reader to the vastly cited work of \cite{jain2010}  for a general discussion about $kmeans$. The data were simulated under the same distribution scheme of Section \ref{subsec:ht}, with $Re=100$ replications and the methods were compared in terms of mean Adjusted Rand Index (ARI) which measures the agreement of clustering results with simulation scenarios, adjusting for randomness \cite{hubert1985}. An ARI of one indicates perfect matching. No inference is used in this analysis.  This is an appropriate comparison as both methods are set to find exactly three groups. Table \ref{tab:arikmeans} reports the results for three sample sizes $n\in\{10,20,50\}$, two dimension $L\in\{1000,2000\}$ and three groups of sizes $n_1$, $n_2$ and $n_3=n-n_1-n_2$. The data vectors in group $G_1$ have zero mean and the data vectors in $G_2$ and $G_3$ have mean $m_2$ and $m_3$, respectively. Note that the clustering method $uclust3$, based on the maximization of $B_n$ is comparable to $kmeans$ to find the correct clusters, considering this data configuration.
However for larger sample sizes, as the clusters become better defined, with greater separation between the means, $uclust3$ outperforms $kmeans$. Table S3 shows that for the case where $G_1$ has size one, $kmeans$ tends to perform slightly better for smaller sample sizes.

\begin{table}[htb!]
\centering
\caption{Comparison of mean ARI and standard deviation (Sd)  of the accuracy in clustering of $kmeans$ and $uclust3$ methods.}\label{tab:arikmeans}
\label{ari}
\begin{tabular}{c|cc|ccccc}
\toprule
 & \multirow{3}{*}{ $(m_2,\,\,m_3)$ } & \multirow{3}{*}{ ($n_1, \,\,n_2$) } & \multirow{3}{*}{Method} & \multicolumn{4}{c}{Dimension $L$} \\
\cline{5-8}
$n$ & & & & \multicolumn{2}{c}{ 1000 } & \multicolumn{2}{c}{ 2000 } \\
\cline{5-8}
 & & & & Mean & Sd & Mean & Sd \\ 
\midrule
\multirow{8}{*}{ 10 } & \multirow{4}{*}{ (0.25, 0.5) } & \multirow{2}{*}{ (2, 5) } & $kmeans$ &  0.59  &  0.05  &  0.73  &  0.06  \\
& & & $uclust3$ &  0.58  &  0.03  &  0.63  &  0.02  \\
\cline{3-8}
& & \multirow{2}{*}{ (3, 3) } & $kmeans$  &  0.56  &  0.05  &  0.74  &  0.08  \\
& & & $uclust3$ &  0.52  &  0.05  &  0.6  &  0.05  \\
\cline{2-8}
& \multirow{4}{*}{ (0.5,  1) } & \multirow{2}{*}{ (2, 5) } & $kmeans$ &  0.91  &  0.04  &  0.94  &  0.03  \\
& & & uclust3 &  0.74  &  0.01  &  0.74  &  0  \\
\cline{3-8}
& & \multirow{2}{*}{ (3, 3) } & $kmeans$ &  0.9  &  0.05  &  0.87  &  0.07  \\
& & & $uclust3$ &  0.92  &  0.03  &  0.96  &  0.02  \\
\midrule
\multirow{8}{*}{ 20 } & \multirow{4}{*}{ (0.25, 0.5) } & \multirow{2}{*}{ (2, 10) } & $kmeans$ &  0.73  &  0.02  &  0.77  &  0.03  \\
& & & $uclust3$ &  0.7  &  0.02  &  0.74  &  0.02  \\
\cline{3-8}
& & \multirow{2}{*}{ (6, 6) } & $kmeans$  &  0.74  &  0.05  &  0.94  &  0.03  \\
& & & $uclust3$ &  0.68  &  0.04  &  0.91  &  0.02  \\
\cline{2-8}
& \multirow{4}{*}{ (0.5,  1) } & \multirow{2}{*}{ (2, 10) } & $kmeans$ &  0.96  &  0.01  &  0.94  &  0.02  \\
& & & $uclust3$ &  1  &  0  &  1  &  0  \\
\cline{3-8}
& & \multirow{2}{*}{ (6, 6) } & $kmeans$ &  0.81  &  0.07  &  0.84  &  0.07  \\
& & & $uclust3$ &  1  &  0  &  1  &  0  \\
\midrule 
\multirow{8}{*}{ 50 } & \multirow{4}{*}{ (0.25, 0.5)} & \multirow{2}{*}{ (2, 25) } & $kmeans$ &  0.76  &  0.01  &  0.79  &  0.01  \\
& & & $uclust3$ &  0.73  &  0  &  0.74  &  0.01  \\
\cline{3-8}
& & \multirow{2}{*}{ (16, 16) } & $kmeans$  &  0.93  &  0.02  &  0.89  &  0.05  \\
& & & $uclust3$ &  0.94  &  0  &  1  &  0  \\
\cline{2-8}
& \multirow{4}{*}{(0.5 ,  1) } & \multirow{2}{*}{ (2 , 25) } & $kmeans$ &  0.95  &  0.01  &  0.95  &  0.01  \\
& & & $uclust3$ &  1  &  0  &  1  &  0  \\
\cline{3-8}
& & \multirow{2}{*}{ (16, 16) } & $kmeans$ &  0.8  &  0.07  &  0.81  &  0.07  \\
& & & $uclust3$ &  1  &  0  &  1  &  0  \\
\bottomrule
\end{tabular}
\end{table}

\pagebreak
\subsection{Finding correct clusters and comparing $uclust3$ and $uhclust$ in a presence of an outlier}\label{subsec:compuhsig}

A simulation study similar to Section \ref{subsec:ut} was performed to compare our $uclust3$ with the hierarchical methods $uhclust$ from \cite{valk20} and $sigclust$ from \cite{kimes17,sigclust2} in terms of the ability to correctly find  statistically significant groups. The group $G_1$ has only one element, the size of $G_2$ is $n_2=\lfloor n/3\rfloor$. For all three methods the same level of significance $\alpha=0.05$ was considered. The $sigclust$ method was not able to find the correct groups in any scenario, with a proportion of correct answers equal to zero and for this reason it was excluded from the analysis.  Figures \ref{fig:proptrueclust11} and \ref{fig:proptrueclust12} report curves of proportion times that the algorithms found significant separation and correct groups considering different values of $m_2-m_1$ varying on the $x$ axis, with sample size $n$ taking values in $\{10,20,50\}$ and dimension $L=1000$ and $L=2000$ The results are based on 50 repetitions.

\begin{figure}[h!]
\begin{center}
\includegraphics[scale=0.25]{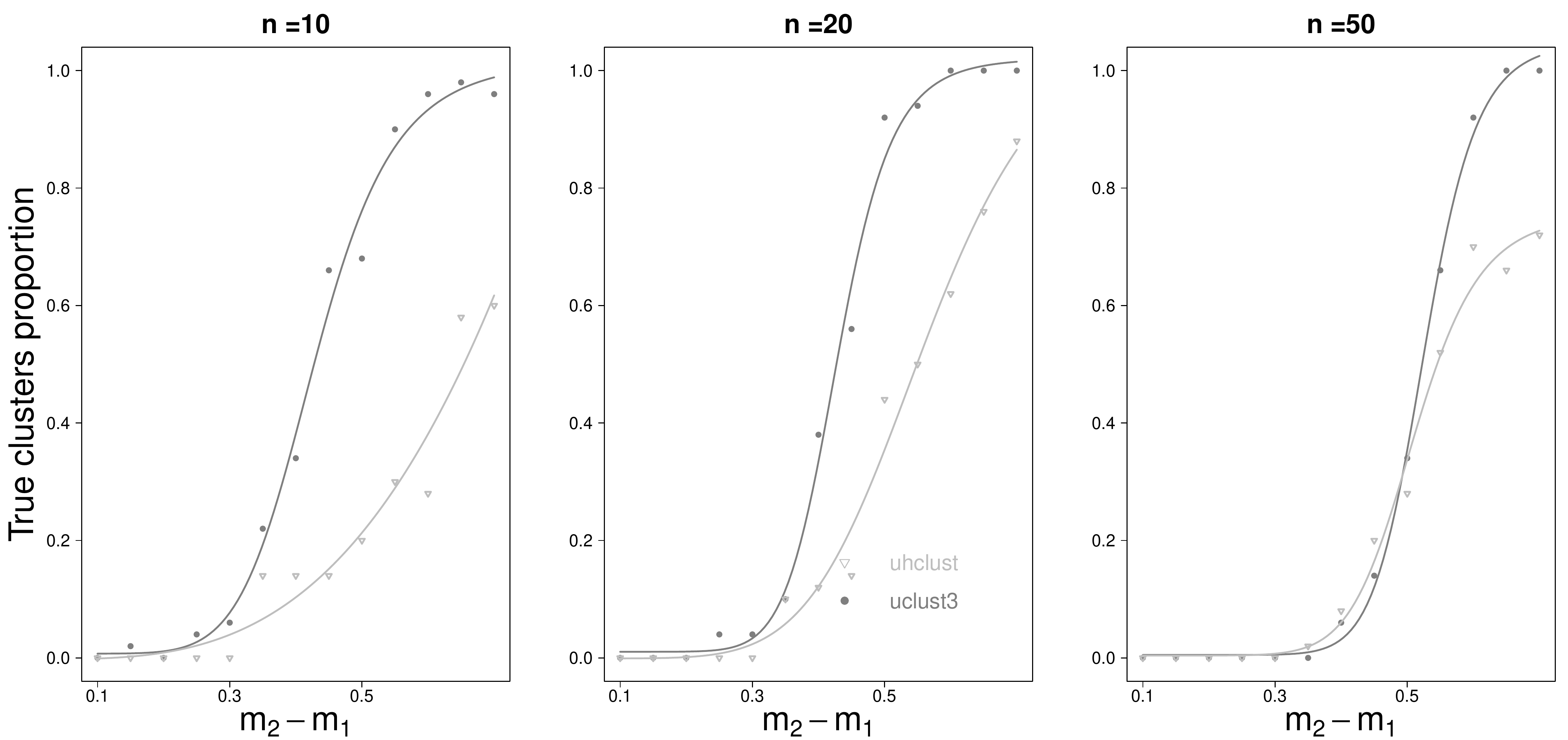}
\end{center} \vspace{-2mm}
\caption{True cluster proportion curves of $uclust3$ (dark gray) and $uhclust$ (light gray) for dimension $L=1000$ with 50 replications of each scenario of $n$ with $\alpha=0.05$ and one outlier.
}
\label{fig:proptrueclust11}
\end{figure}

\begin{figure}[h!]
\begin{center}
\includegraphics[scale=0.25]{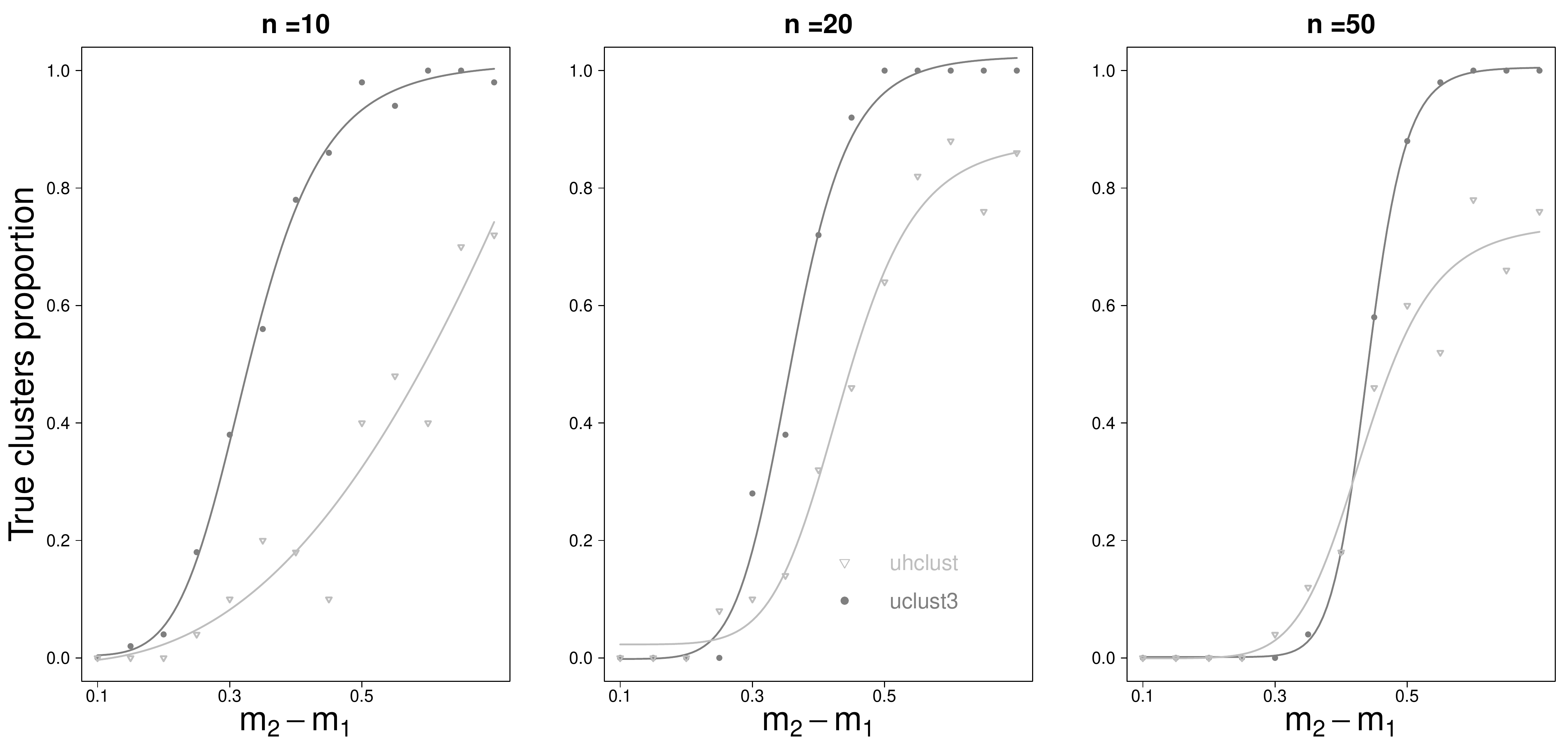}
\end{center} \vspace{-2mm}
\caption{True cluster proportion curves of $uclust3$ (dark gray) and $uhclust$ (light gray)  for dimension $L=2000$ with 50 replications of each scenario of $n$ with $\alpha=0.05$ and one outlier.
}
\label{fig:proptrueclust12}
\end{figure}

 The $uclust3$ method (dark grey)  outperforms $uhclust$ method (light gray) in all scenarios presenting greater ability to find the correct groups for less separation. However, for $n = 50$ these method are more competitive  although the method proposed here $uclust3$ still stands out for larger separations. The conclusions do not change with the variation of dimension $L$. In Section S5 on the supplementary materials we present results of a simulation study for the cases where there are no outlier.  Supplementary Figures S2 and S3 shows the true cluster proportion curves of $uclust3$ and $uhclust$  for dimension $L=1000$ and $L=2000$. We note that the $uclust3$ method outperforms $uhclust$ in all scenarios.


\section{Applications}\label{sec::application}

\subsection{Peripheral blood mononuclear cells }
In order to illustrate of the applicability of the \emph{utest} we consider a one-way MANOVA (multivariate analysis of variance) testing problem for high-dimensional data. This issue was addressed in \cite{zhang2017} by exploring peripheral blood mononuclear cell (PBMC) data, consisting of 42 normal, 26 ulcerative colitis (UC) and 59 Crohn’s disease (CD) tissue samples ($n=127$), each having $L=22,283$ gene expression level measurements. This dataset has been studied by \cite{burczynski2006} and is available at \emph{http://www.ncbi.nlm.nih.gov/gds} with accession ID GDS1615.  The classical hypothesis test where
the interest is to test whether the 3 mean vectors are equal, can be described as follows: Let $\mathbf{X}^{(g)}_{1}, \ldots, \mathbf{X}^{(g)}_{n_{g}}$ be a sample of i.i.d. vectors from the $L$-variate distribution ${F}_g$, with $ \E(\mathbf{X}^{(g)}_{1})=\bm{\mu}_g$ and  $\cov(\mathbf{X}^{(g)}_{1}) = \bm{\Sigma},\quad $  for $ g=1, \ldots, 3$ and  $ n=n_1+n_2+n_3$.
%
%
Then, the null hypothesis is
$H_0:\bm{\mu}_1=\bm{\mu}_2=\bm{\mu}_3$. In our context, however, the normality and variance homogeneity requirements are not necessary, and the null hypothesis becomes the more general
\[H_0:{F}_1={F}_2={F}_3.\]
We apply the $utest$ for testing the equality of mean expression levels of the normal, UC and CD groups of the PBMC data.  The value of standardized $B_n$ statistic is 13.20997 (p-value$<<$0.001) with which we reject the null hypothesis of equality of mean expression levels.

\label{subsec::application_pbmc}

\subsection{Image recognition}\label{subsec::application_image}

We consider a simple example of image recognition to illustrate the applicability of our methodology. The data consists of  images from three public figures (Tony Blair, Colin Powell and George W. Bush) which were selected from the Labeled Faces Wild (LFW) dataset (\cite{lfw}). The data were run through OpenFace's convolutional neural network (\cite{openface}), a procedure that outputs a 128-dimensional representation of the faces which preserves Euclidean distances. In case the reader wants to know more about how the OpenFace works, we recommend reading their website \cite{openface}. In this illustrative application, we randomly select 10 images from each  public figure in the above cited dataset, run $uhclust$, $sigclust$ and $uclust3$ with significance level $\alpha=0.05$. Figure \ref{fig:uhc} presents the hierarchical clustering dendrogram annotated with p-values for all tests performed in the $uhclust$ method. We found 4 homogeneous groups, with a significant division in the Bush image group and an ARI=0.8585. Figure \ref{fig:sig} presents the dendrogram with corresponding $sigclust$ analysis of the same data which produces six significant clusters, segregating Bush and Powell's images from the reminder and finding one outlier in Blair's group. The ARI for this case was 0.7788. Applying the $uhclust3$ method we found exactly 3 homogeneous groups, each corresponding to one of the public figures with ARI=1.  

In the Section S6 in the supplementary materials we consider the same dataset and public figures to carry out an analysis with three groups in which one has size one. Figures S4 and S5 in the supplementary materials present the clustering dendrogram annotated with results of  all tests performed in the $uhclust$ and $sigclust$ methods. None of these methods were able to identify the outlier and both methods achieved ARI of 0.8135593. However, when we applied the $uclust3$ method we found the correct groups with ARI of 1, supporting the best results $uclust3$ in the simulation study.

\begin{figure}[h!]
\begin{center}
\includegraphics[scale=0.45]{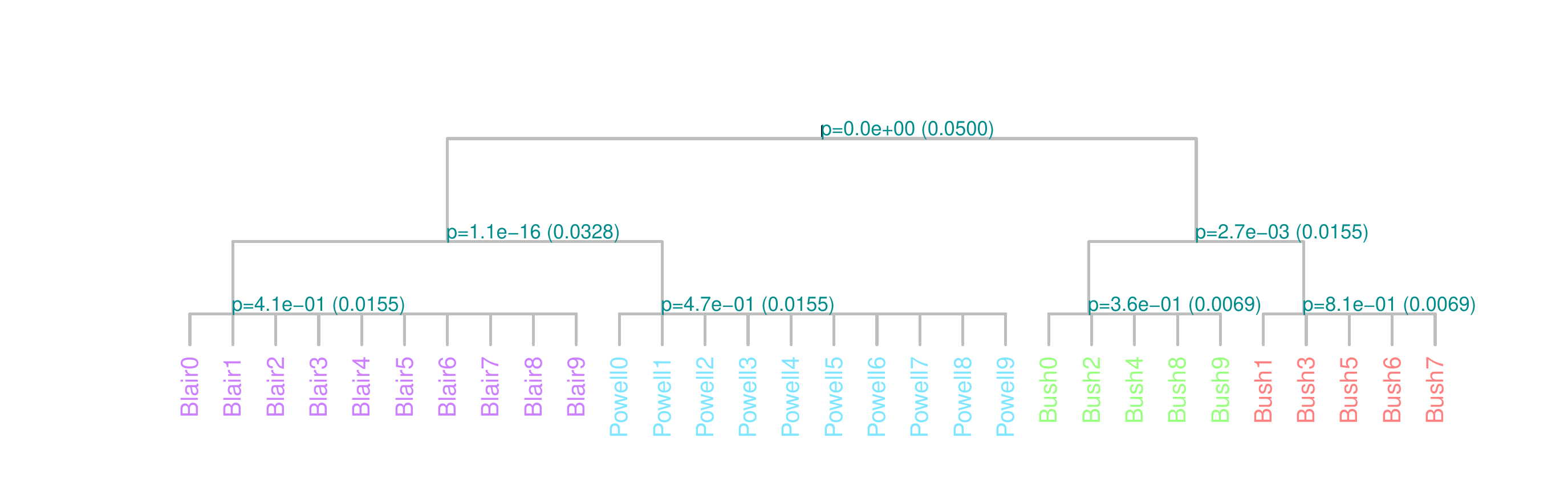}
\end{center} \vspace{-2mm}
\caption{Annotated dendrogram of significance analysis for hierarchical clustering $uhclust$ for 30 pictures of 3 public figures. P-values and corrected significance levels $\alpha^*$ are shown for each test performed at the corresponding node.
}
\label{fig:uhc}
\end{figure}

\begin{figure}[h!]
\begin{center}
\includegraphics[scale=0.42]{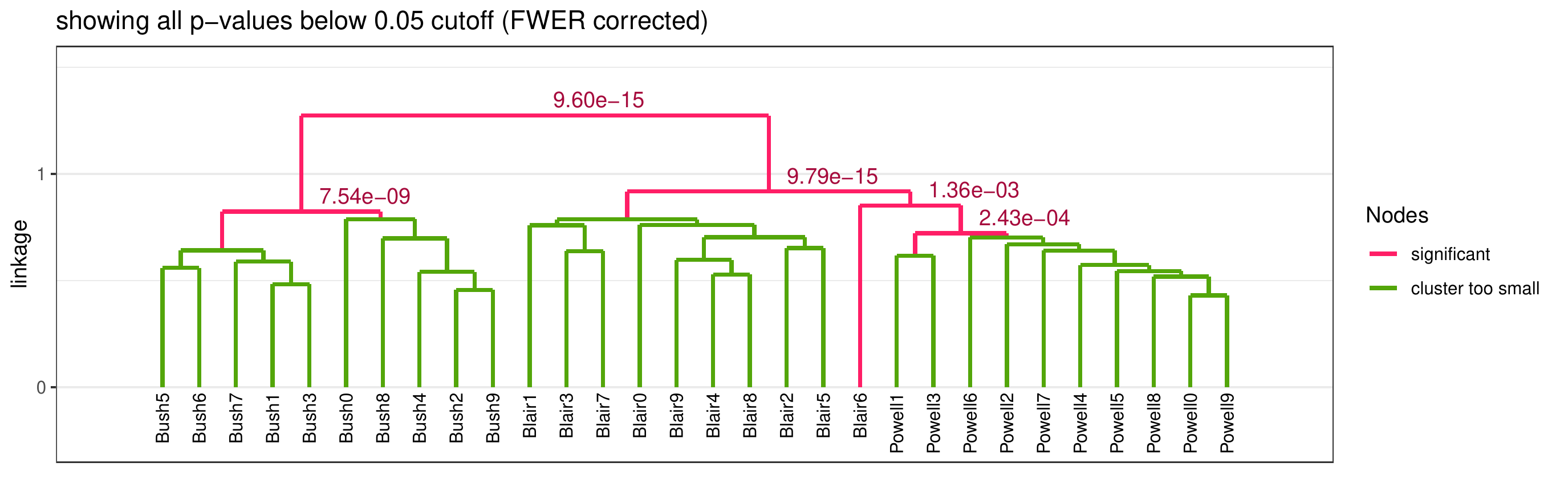}
\end{center} \vspace{-2mm}
\caption{Annotated dendrogram of significance analysis for hierarchical clustering $sigclust$ for 30 pictures of 3 public figures. P-values and corrected significance levels $\alpha^*$ are shown for each test performed at the corresponding node.
}
\label{fig:sig}
\end{figure}

\pagebreak
\section{Discussion}\label{sec:conclusions}

We have developed a clustering method that separates a dataset specifically into \emph{three} groups allowing the assessment of significance of this partition.  Our methodology is based on the U-statistics clustering framework proposed in \cite{pinheiro09} and is an extension of the approach of \cite{cybis18,valk20}. 
Considering the $B_n$ statistic of \cite{pinheiro09} that aims
to test homogeneity of three predefined groups we propose an extension of the $B_n$ statistic to allow for an outlier, namely one of the groups has only one element ($n_1=1$). Additionally we verified statistical properties that ensure the compatibility of this new definition with the overall framework. 
We then considered group homogeneity testing with this newly defined statistic, and explored empirical properties such as Type I error control and power, 
showing adequate preformance. Afterwards, we extended this framework to address the issue of partitioning a dataset into three optimal statistically significant clusters, proposing a new clustering criteria that defines the $uclust3$ method. This differs from previous methods for instead of find and testing a two group separation, $uclust3$ finds the best significant partitions in three clusters. This can pave the way for inference in $K$ groups.

This U-statistics based methodology can be applied to a wide range of problems, since they make very few assumptions about the distribution of the data. Although in the simulation study and in the application we have used Euclidean distance, this is not a necessary requirement  for theory development. Additionally, even if the data come from a non-normal multivariate distribution, the required asymptotic normality is guaranteed as long as the distances have finite variance and the sum of all distance covariances do not grow too fast ($O(L)$ see Theorem 2).  The clustering procedures $uclust3$ proposed here require large $L$ since $B_n$ for $n_1=1$ is only asymptotically normal in the dimension $L$. As verified in previously work of \cite{valk20}, for the settings in the simulation studies, in practice our tests achieve good Type I error control having difficulties only when $L$ is smaller than $10n$. This is, by excellence, the HDLSS setting.

An important step for developing the homogeneity test is to establish the number of possible configurations of $n$ elements separated in three groups. A system of recursive equations was developed to solve this combinatorial problem and the idea may be used to solve an equivalent problem involving $K>3$ groups.

The significance clustering method $uclust3$ proposed here returns the partition that better separates the data into three statistically significant groups in terms of the $B_n$ statistic. Thus we can compare it with $kmeans$, which is one of the most popular clustering method,  regarding the ability of correctly find three groups. A simulation study suggests that $uclust3$ is competitive with $kmeans$ when we have a size one group and outperforms $kmeans$ in the context in which groups having an underlying cluster structure with more than 2 elements each and large sample sizes.

Since our methodology is a natural extension of the $uclust$ method proposed by \cite{valk20} it inherits many  helpful properties such as the ability to avoid the hazards of directly estimating the covariance matrix, by obtaining $\operatorname{Var}(B_n)$ through  resampling. However, they have different purposes, while $uclust$ aims to find the best significant partition in two groups, $uclust3$ aims to find the best significant separation in three groups, so they are not directly comparable. To support the usefulness of the $uclust3$, we carried out a simulation study to compare this method with the hierarchical version of $uclust$ ($uhclsut$) and with another hierarchical approach ($sigclust$), which both are able to find a significant partition into three groups, when this partition exists. We simulated normal data with a three group structure, separating these groups in terms of the means and use the  proportion of correct configurations found to compare the methods. In the situations considered, $sigclust$ had serious difficulties in finding the proper arrangement, while $uclust3$ performed better than $uhclust$ in all scenarios. Additionally, in the applications we have shown the applicability of this methodology, first with a one-way MANOVA testing problem without the requirement of normality of data and variance homogeneity, and then with an application to image recognition data where we select three public figures and observe that the $uclust3$ method was the only one able to correctly find the three groups of figures.

Finally the conclusion is that our $uclust3$ method is appropriate to separate a high dimensional low sample size datasets into \emph{three} groups, being more powerful than some other methods in the specific situation in which a structure of three groups is present in the data.

\section*{Supplementary material}

\begin{description}
\item[Supplementary material:] Derivations, supplementary tables and figures (pdf)
\item[Code:] R-functions containing all methods developed in this article (will be available in the uclust package at CRAN).
\item[Data:] Dataset used in the application and corresponding script (zip).
\end{description}

\section*{ Acknowledgements}
\small
We would like to thanks Pedro Fusieger for the fruitful discussions about the number of possible assignments of all $n$ elements in three subgroups.

\section*{ Funding}
\small
Debora Zava Bello research was supported by Coordenação de Aperfeiçoamento de Pessoal de Nível Superior (CAPES).

\bibliographystyle{unsrtnat}
\bibliography{references}

\includepdf[pages=-]{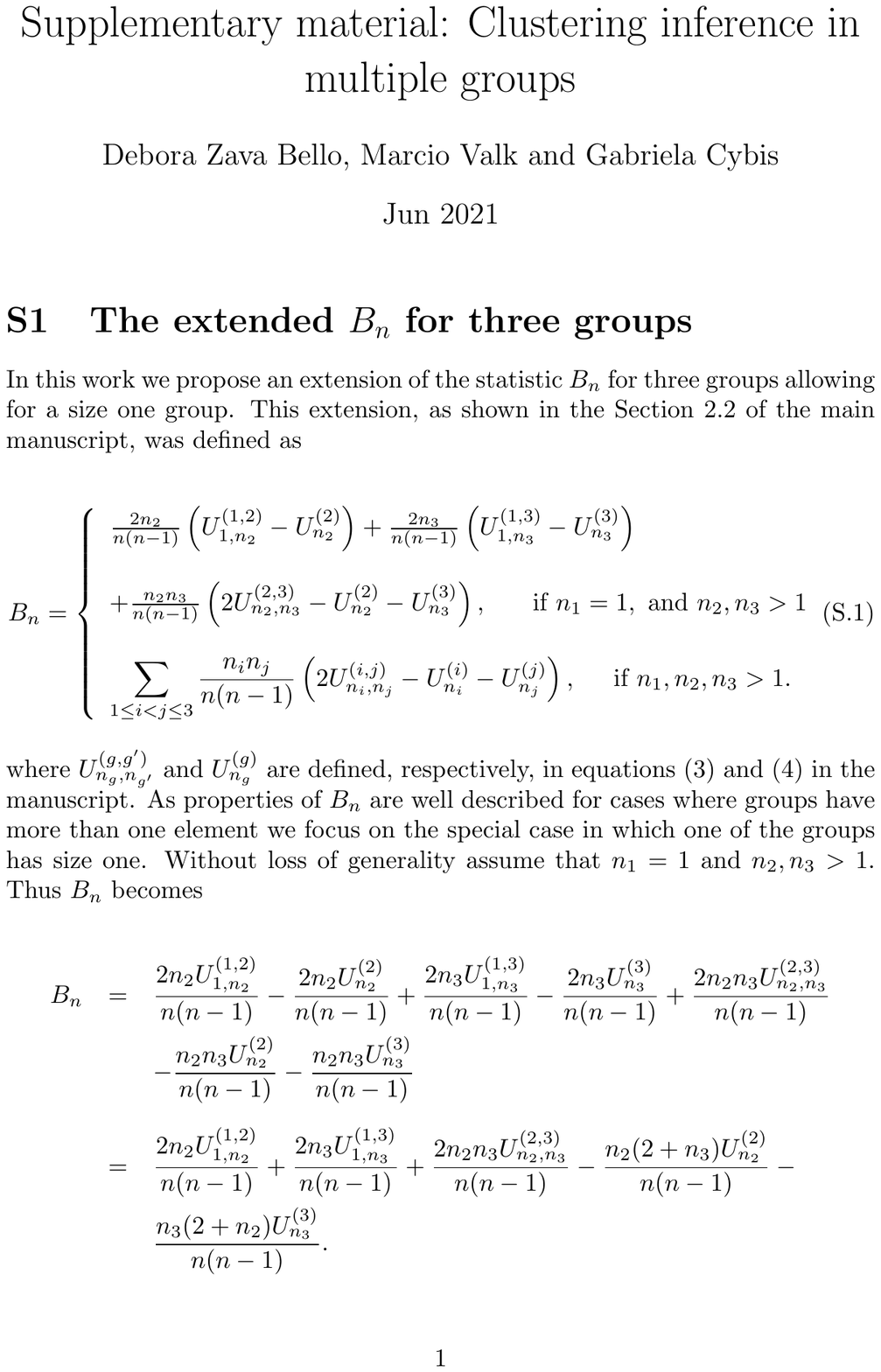}
\end{document}